\def\TODOS{0}
\newif\ifsubmission
\newif\iffullversion
\newif\ifanonymous
	\newcommand{\TODO}[1]{}
	\newcommand{\ASSIGN}[2]{}
	\newcommand{\TOASK}[1]{}
	\newcommand{\tsc}[1]{}
	\newcommand{\ALL}[1]{}
	\newcommand{\TODO}[1]{{\textcolor{red}{TODO: #1}}}
	\newcommand{\ASSIGN}[2]{\textcolor{red}{(Assigned to #1 with deadline #2)}}
	\newcommand{\ALL}[1]{\textcolor{violet}{ALL: #1}}
\newcommand{\party}[1]{\ensuremath{\Server_{#1}}\xspace}
\newcommand{\boeltodo}[1]{\todo[inline, color=green!30]{{[BN] #1}}}
\newcommand{\cnote}[1]{\todo[inline, color=blue!30]{{[CO] #1}}}
\newcommand{\hnote}[1]{\todo[inline, color=purple!30]{{[HK] #1}}}
\newcommand{\boeltodo}[1]{}
\newcommand{\cnote}[1]{}
\newcommand{\hnote}[1]{}
\newcommand{\sh}[1]{%
    \left[{#1}\checknextarg}
\newcommand{\checknextarg}{\@ifnextchar\bgroup{\gobblenextarg}{\right]}}
\newcommand{\gobblenextarg}[1]{;{#1}\@ifnextchar\bgroup{\gobblenextarg}{\right]}}
\newcommand{\ssr}[2]{\sh{#1}_{2^{#2}}}
\newcommand{\ssi}[1]{\sh{#1}_{\mathbb{Z}}}
\newcommand{\trunc}{\ensuremath{c}}
\newcommand{\geoparam}{p}
\newcommand{\truncparterr}{e}
\newcommand{\truncerr}{\Delta}
\newcommand{\powertwo}{a}
\newcommand{\truncate}{\algofont{trunc}}
\newcommand{\preprocessing}{\algofont{preprocessing}}
\newcommand{\convert}{\algofont{convert}}
\newtheorem{lemma}{Lemma}
\definecolor{mycolor}{rgb}{0.122, 0.435, 0.698}
\newcommand{\xmath}[1]{\ensuremath{#1}\xspace}
\newcommand{\algofont}[1]{\xmath{\mathtt{#1}}}
\newcommand{\PreserveBackslash}[1]{\let\temp=\\#1\let\\=\temp}
\newcolumntype{C}[1]{>{\PreserveBackslash\centering}p{#1}}
\newcolumntype{R}[1]{>{\PreserveBackslash\raggedleft}p{#1}}
\newcolumntype{L}[1]{>{\PreserveBackslash\raggedright}p{#1}}
\newcommand{\Z}{\ensuremath{\mathbb{Z}}}
\newcounter{index}
\newtcolorbox{titlebox}[5]{enhanced,center,colframe=mycolor,boxrule={#3},arc={#2},auto outer arc,%
	vfill before first,before={\par\medskip\noindent},after={\par\medskip},top=12pt,left=4pt,%
	enlarge top by=7pt,
	title={\rule[-.3\baselineskip]{0pt}{\baselineskip}\normalsize\sffamily\bfseries #1}, varwidth boxed title*=-30pt,
	attach boxed title to top left={yshift=-10pt,xshift=10pt}, coltitle=black,
	boxed title style={colback=white,boxrule={#5},arc={#4},auto outer arc}
}
\newcommand{\protcomm}[1]{}
\newcommand{\vectorize}[1]{\ensuremath{\boldsymbol{#1}}}
\newcommand{\roundtext}{round}
\newcommand{\round}[1]{\ensuremath{\text{\roundtext}_{\Delta}(#1)}}
\newcommand{\roundlargeparantheses}[1]{\ensuremath{\text{\roundtext}_{\Delta}\left(#1\right)}}
\newcommand{\ouralgorithm}{Noise-and-round\xspace} 
\newcommand{\bitstotruncate}{\trunc}
\newcommand{\remainingbits}{\ensuremath{r}}
\newcommand{\permuteandflip}{Permute-and-flip\xspace}
\newcommand{\exponentialmechanism}{Exponential mechanism\xspace}
\newcommand{\randomizedresponse}{Randomized respone\xspace}
\newcommand{\rappor}{RAPPOR\xspace}
\newcommand{\secureaggregation}{Secure aggregation\xspace}
\newcommand{\multicentralmodel}{multi-central model\xspace}
\newif\ifappendix
\newcommand{\appendixconditional}[2]{\ifappendix #1\else #2\fi}
\newcommand{\servers}{k}
\newcommand{\Servers}{\mathcal{S}}
\newcommand{\Server}{S}
\newcommand{\thr}{t}
\newcommand{\hon}{h}
\newcommand{\out}{o}
\newcommand{\crand}{\ssr{\mathrm{corr}}{\powertwo}}
\newcommand{\Argmax}{\ensuremath{\algofont{ArgMax}}\xspace}
\newcommand{\ourmpcprotocol}{Distributed-noise-and-round\xspace} 
\newcommand{\ourmpcalgorithm}{Relaxed-noise-and-round\xspace} 
\theoremstyle{plain}
\newtheorem{theorem}{Theorem}[section]
\newtheorem{corollary}[theorem]{Corollary}
\theoremstyle{definition}
\newtheorem{definition}[theorem]{Definition}
\theoremstyle{remark}
\title{Differentially Private Selection from Secure Distributed Computing}
\author{%
   Ivan Damgård\\
  Aarhus University\\
  Denmark\\
  \texttt{ivan@cs.au.dk} \\
  \And
   Hannah Keller\\
  Aarhus University\\
  Denmark\\
  \texttt{hkeller@cs.au.dk} \\
 \And
  Boel Nelson\\
 University of Copenhagen\\
 Denmark\\
 \texttt{bn@di.ku.dk} \\
 \And
   Claudio Orlandi\\
 Aarhus University\\
 Denmark\\
 \texttt{orlandi@cs.au.dk} \\
 \And
   Rasmus Pagh\\
 University of Copenhagen\\
 Denmark\\
 \texttt{pagh@di.ku.dk} \\
}
\begin{document}

\maketitle

\begin{abstract}
Given a collection of vectors $\vectorize{x}^{(1)},\dots,\vectorize{x}^{(n)} \in \{0,1\}^d$, the \emph{selection} problem asks to report the index of an ``approximately largest'' entry in $\vectorize{x}=\sum_{j=1}^n \vectorize{x}^{(j)}$.
Selection abstracts a host of problems---in machine learning it can be used for hyperparameter tuning, feature selection, or to model empirical risk minimization.
We study selection under differential privacy, where a released index guarantees privacy for individual vectors.
Though selection can be solved with an excellent utility guarantee in the central model of differential privacy, the distributed setting where no single entity is trusted to aggregate the data lacks solutions.
Specifically, strong privacy guarantees with high utility are offered in high trust settings, but not in low trust settings.
For example, in the popular \emph{shuffle model} of distributed differential privacy, there are strong lower bounds suggesting that the utility of the central model cannot be obtained.
In this paper we design a protocol for differentially private selection in a trust setting similar to the shuffle model---with the crucial difference that our protocol tolerates corrupted servers while maintaining privacy.
Our protocol uses techniques from secure multi-party computation (MPC) to implement a protocol that:
(i) has utility on par with the best mechanisms in the central model,
(ii) scales to large, distributed collections of high-dimensional vectors, and
(iii) uses $k\geq 3$ servers that collaborate to compute the result, where the differential privacy guarantee holds assuming an honest majority.
Since general-purpose MPC techniques are not sufficiently scalable, we propose a novel application of \emph{integer secret sharing}, and evaluate the utility and efficiency of our protocol both theoretically and empirically.
Our protocol is the first to demonstrate that large-scale differentially private selection is possible in a distributed setting.
\end{abstract}

\section{Introduction}\label{sec:introduction}
Differentialy private \emph{selection} of the largest entry in a vector enables data analysis on sensitive datasets---for example announcing the winning candidate in a vote, or identifying a common genetic marker from a set of DNA sequences.
While there exist solutions to the selection problem with strong guarantees scaling logarithmically with dimension and independent of the size of vector entries (e.g., \cite{mckenna_permute-and-flip_2020}), they operate in the \emph{central model} of differential privacy, which requires trust in a single party to perform the computation.
Existing solutions with weaker trust assumptions, on the other hand, scale poorly or require significantly more noise to maintain privacy.

A pragmatic solution is to aim for a middle ground: distributing trust among multiple parties.
This setting is natural when a person trusts a party (e.g., their local hospital) with their data, but not \emph{every} party (e.g., they may not want to share their data with every hospital).
In principle, every mechanism in the central model of differential privacy could be simulated in such a distributed setting using techniques for secure multi-party computation (MPC), but that approach is not viable in general because MPC is not yet practical for large-scale general-purpose computations.
Steinke~\cite{steinke_multi-central_2020} introduced a more restricted class of protocols working in the so-called \emph{\multicentralmodel}, in which data holders submit information to $k$ servers, which then communicate and compute the output of the mechanism.
An attractive property of this model is that data holders only need to submit a single message to each server, after which no involvement is needed.
Nevertheless, techniques such as additive secret sharing allow protocols that have high utility and protect privacy even if $k-1$ servers share their information. However, MPC protocols tolerating
$k-1$ corruptions require computationally heavy public-key encryption techniques and are not very efficient.
In this work we will therefore work with a slightly weaker notion of privacy: the information gained by any \emph{minority} of the servers is differentially private. This allows the MPC solution to be much more efficient.

A popular approach to differentially private protocols in distributed settings is the \emph{shuffle model}~\cite{bittau_prochlo_2017,cheu_distributed_2019} in which scalable techniques from cryptography are combined with techniques from differential privacy, often allowing utility close to what is possible in the central model.
However, existing protocols for selection use private summation, which is known to require much more noise than selection.
It is likely that there is a fundamental obstacle to achieving better utility for selection in the shuffle model, due to the lower bound of~\cite{cheu_limits_2021} which holds for a wide class of mechanisms in the shuffle model.
Another general tool for distributed differential privacy, \emph{secure aggregation}~\cite{goryczka_comprehensive_2017}, faces the same problem, namely that the magnitude of noise needs to grow polynomially with the dimension $d$ of the input vectors.
Finally, we mention \emph{local differential privacy} (LDP)~\cite{duchi_local_2013}, in which each input vector is independently made differentially private, and where the magnitude of noise grows polynomially in the number $n$ of input vectors.

Given that existing distributed methods for the selection problem are far from matching what is possible in the central model, and since we know that \emph{in principle} it is possible to simulate the central model with MPC techniques, Steinke~\cite{steinke_multi-central_2020} suggests to solve selection via an MPC implementation of argmax on secret-shared sums, but states that further investigation about the practicality is needed.
In this work we perform such an investigation, modifying the approach in several ways to achieve the best fit with scalable MPC techniques.
The contributions of this work are as follows:
\begin{itemize}[noitemsep]
	\item We present the \ouralgorithm mechanism (\Cref{sec:algorithm}), a distributed differentially private selection algorithm with utility guarantees close to the best algorithms in the central model.
	\item We introduce the first demonstration of the \multicentralmodel for the selection problem using MPC techniques (\Cref{sec:mpc-protocol}).
	\cnote{unsure about this claim. let's talk at the call}
	\item \cnote{old one: it generalizes the MPC protocol to support more than the popularized case with 3-servers ((cite needed)), and} We design a new combination of integer secret sharing and existing MPC techniques which is tailored to perform a secure and efficient distributed computation of differentially private selection. In particular, this allows non-interactive truncation of input data so that approximate comparisons can be performed more efficiently than previously known.
	\item We provide an empirical evaluation of the utility and scalability of \ouralgorithm using both synthetic and real-world data for the 3-servers case (\Cref{sec:experiments}).
\end{itemize}

\paragraph*{Problem formulation.}
The selection problem is perhaps the simplest instance of ``heavy hitters,'' a problem ubiquitous in data analysis and machine learning.
Given a collection of vectors $\vectorize{x}^{(1)},\dots,\vectorize{x}^{(n)} \in \{0,1\}^d$ it asks to report the index of an ``approximately largest'' entry in $\vectorize{x}=\sum_{j=1}^n \vectorize{x}^{(j)}$.
More precisely, the task is to report an index $i$ such that $x_i \geq \max_\ell(x_\ell) - \alpha n$, where $\alpha \in (0,1)$ is an approximation parameter specifying the (additive) error within which $x_i$ is largest.
This problem is a special case of general heavy hitters problems, which asks for the most frequently occurring elements in a multiset.

\paragraph*{Differential privacy.}
Differential privacy~\cite{dwork_calibrating_2006} formalizes the worst-case information leakage of any output from an algorithm.
Given two neighboring datasets as input differential privacy limits how much the output distributions can differ.
We say that a pair of datasets are neighboring, denoted $\vectorize{x} \sim \vectorize{x}'$, if and only if $\vectorize{x}$ and $\vectorize{x}'$ differ on exactly one element.
In this paper, we work in the bounded setting where the dataset's size is fixed.

\begin{definition}[\cite{dwork_calibrating_2006} ($\varepsilon$, $\delta$)-differential privacy]\label{def:approximate-dp}
	A randomized mechanism $\mathcal{M}$ satisfies $(\varepsilon, \delta)-\text{differential privacy}$ if and only if for all pairs of neighboring datasets $\vectorize{x} \sim \vectorize{x}'$ and all set of outputs $Z$ we have
	$\Pr[\mathcal{M}(\vectorize{x}) \in Z] \leq e^\varepsilon \Pr[\mathcal{M}(\vectorize{x}') \in Z]+\delta$.
    If $\mathcal{M}$ satisfies ($\varepsilon$, 0)-DP we say that it satisfies $\varepsilon$-differential privacy.
\end{definition}

\paragraph{Technical overview.}

We first describe our approach in the central model and then extend to the distributed setting.
The technique is rather standard, but with a couple of deviations:
first, while~\cite{steinke_multi-central_2020, dwork_algorithmic_2014} suggests to use Laplace noise, we show that it suffices to use one-sided, geometric noise when computing the noisy argmax.
Second, we show that (not surprisingly) the protocol is robust to scaling and rounding before taking argmax, which helps the efficiency of the MPC protocol.

The bottleneck in the secure computation protocol is the comparisons required to compute argmax. For this we use state-of-the-art protocols from \cite{C:EGKRS20}. These must be supplied initially with correlated randomness and are constructed as protocols for dishonest majority. However, we assume $k$ servers with $t$ semi-honest corruptions where $t< k/2$. Therefore, with the help of all servers, we can  preprocess the correlated randomness using the honest majority protocol from \cite{TCC:ACDEY19}, after which the first \(t+1\) servers run the protocol from \cite{C:EGKRS20}. Finally, we let data owners supply inputs as secret shares over the integers. This allows the servers to truncate the input without interaction while introducing only a small error; then the comparisons can work over fewer bits and hence be more efficient.

\section{Algorithm in the central model}\label{sec:algorithm}

In this section we analyze Algorithm~\ref{alg:private-arg-max}, which solves selection in the central model and is well-suited for being extended to an efficient secure multi-party computation protocol (described in \Cref{sec:mpc-protocol}).
The algorithm is a variant of the well-known ``report noisy argmax'' approach to selection, which has been proposed as a candidate algorithm on which to base an MPC implementation~\cite{steinke_multi-central_2020}.
\begin{algorithm}[tb]
	\caption{\ouralgorithm}
	\label{alg:our-algorithm}
	\label{alg:private-arg-max}
	\begin{algorithmic}[1]
		\State \textbf{Input:} $\vectorize{x}^{(1)},\dots,\vectorize{x}^{(n)}\in  \{0,1\}^d$
		\label{alg:input}

		\State \textbf{Parameters:} $\varepsilon > 0$, $\gamma \geq 1$, $\Delta \geq 0$
		\label{alg:parameters}

		\State sample $\vectorize{\eta} \sim \text{Geometric}
		(1-e^{-\varepsilon/2})^d$\label{alg:draw-noise}

		\State $\vectorize{w} \leftarrow \round{(\sum_{j=1}^d \vectorize{x}^{(j)}+ \vectorize{\eta})/\gamma}$\label{alg:add-noise}

		\State\textbf{return} $\arg\max_i(\vectorize{w}_i)$
	\end{algorithmic}
	\label{alg:private-arg-max}
\end{algorithm}

Compared to a plain noisy argmax approach we make two modifications that will improve efficiency of the MPC protocol: 1) Use one-sided, geometric error, and 2) allow the argmax to be based on rounded values.
Rounding is controlled by a parameter $\Delta$, such that for a rational number $w$, $\round{w}$ denotes an integer value (possibly the output of a randomized algorithm) that differs from $w$ by at most $\Delta$, and for inputs $\frac{x+\eta}{\gamma}$ and $\frac{\bar{x}+\eta}{\gamma}$ with $|x - \bar{x}| \leq 1$, using the same internal randomness for both inputs, satistifies:
\begin{equation}\label{eq:sensitivity}
	\left|\roundlargeparantheses{\frac{x+\eta}{\gamma}} - \roundlargeparantheses{\frac{\bar{x}+\eta}{\gamma}}\right| \leq 1 \enspace .
\end{equation}
When applied to a vector $\vectorize{x}$, $\round{\vectorize{x}}$ is computed by rounding independently on each coordinate.
Looking ahead to the distributed implementation of the algorithm, allowing this rounding error will allow us to perform truncation using a simple and efficient method.
Proof in supplementary material.

\begin{lemma}\label{lemma:MPC-privacy}
	Algorithm~\ref{alg:private-arg-max} is $\varepsilon$-differentially private.
\end{lemma}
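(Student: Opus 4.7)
I would adapt the classical Report Noisy Max argument, with adjustments for (i) geometric rather than Laplace noise, (ii) the bounded/replacement neighboring notion, and (iii) the rounding step. First, fix an arbitrary target index $i^*$ and neighboring datasets with sums $X = \sum_j \vectorize{x}^{(j)}$ and $X'$; since each input vector lies in $\{0,1\}^d$, replacement of one vector gives $|X_\ell - X'_\ell| \leq 1$ coordinatewise. I couple the two executions by using the same noise vector $\vectorize{\eta}$ and the same internal rounding randomness in both, so that the sensitivity property~\eqref{eq:sensitivity} yields $|w_\ell - w'_\ell| \leq 1$ for every coordinate $\ell$.

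Second, I would condition on everything except the noise and rounding randomness of coordinate $i^*$. Under this conditioning, the maxima $M = \max_{\ell \neq i^*} w_\ell$ and $M' = \max_{\ell \neq i^*} w'_\ell$ are both determined, and $|M - M'| \leq 1$. The events $\{\text{Alg.\ref{alg:private-arg-max}}(X) = i^*\}$ and $\{\text{Alg.\ref{alg:private-arg-max}}(X') = i^*\}$ conditionally reduce to $\{w_{i^*} > M\}$ and $\{w'_{i^*} > M'\}$ (for a fixed tie-breaking rule), so it suffices to prove the pointwise inequality
\[
\Pr_{\eta_{i^*},\, r_{i^*}}[w_{i^*} > M] \;\leq\; e^\varepsilon \Pr_{\eta_{i^*},\, r_{i^*}}[w'_{i^*} > M'],
\]
after which integration over the remaining randomness finishes the proof.

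Third, I would establish the pointwise inequality by a shift argument on $\eta_{i^*}$. For fixed $r_{i^*}$ the event $\{w_{i^*} > M\}$ corresponds to $\eta_{i^*}$ exceeding some threshold $T$, and the analogous primed event to a threshold $T'$; the combined contributions $|X_{i^*}-X'_{i^*}|\le 1$ (shifting the argument of the rounding by at most $1$) and $|M-M'|\le 1$ imply $T' \leq T + 2$. The geometric tail identity $\Pr[\eta_{i^*} \geq t] = e^{-t\varepsilon/2}$ for integer $t \geq 0$ then yields a ratio bounded by $e^{2\cdot \varepsilon/2} = e^\varepsilon$, which is exactly the privacy parameter claimed.

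\textbf{Main obstacle.} The delicate point is formalizing the last step in the presence of rounding: the two-sided bound in~\eqref{eq:sensitivity} does not by itself guarantee that a shift of $\eta_{i^*}$ translates monotonically into a shift of $w_{i^*}$. I would resolve this either by (a) observing that the natural (and MPC-friendly) rounding schemes considered in \Cref{sec:mpc-protocol} are non-decreasing in their input for fixed internal randomness, so that increasing $\eta_{i^*}$ by $2$ can only raise $w_{i^*}$, or (b) working directly at the distribution level, re-indexing the sum over $(\eta_{i^*}, r_{i^*})$ by the integer shift $\eta_{i^*} \mapsto \eta_{i^*} + (X_{i^*} - X'_{i^*}) + 2$ so that the sensitivity bound is absorbed into a change of variable in the geometric noise. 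Either route turns the threshold estimate $T' - T \leq 2$ into the required $e^\varepsilon$ ratio.
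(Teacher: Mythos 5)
Your overall route is the one the paper takes: fix the target index, couple the geometric noise and the internal rounding randomness across the two neighboring datasets, condition on all coordinates other than $i^*$, and pay for a shift of $2$ with the geometric tail ratio $e^{2\cdot\varepsilon/2}=e^\varepsilon$. (The paper even proves the warm-up case $\Delta=0$, $\gamma=1$ exactly as in your first two steps.)

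The gap is in your third step, and it is a units mismatch. The bound $|M-M'|\le 1$ lives in the \emph{rounded} domain, but you convert it into a shift of the threshold $T$ on $\eta_{i^*}$, which lives in the unrounded domain. Raising $\roundlargeparantheses{(X_{i^*}+\eta_{i^*})/\gamma}$ by one integer level requires increasing $\eta_{i^*}$ by up to about $\gamma$ (plus rounding slack), not by $1$; so from $|X_{i^*}-X'_{i^*}|\le 1$ and $|M-M'|\le 1$ you only get $T'\le T+1+\gamma(1+O(\Delta))$, which via the geometric tail gives a ratio of order $e^{(1+\gamma)\varepsilon/2}$ rather than $e^\varepsilon$. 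Neither of your proposed fixes closes this: monotonicity (a) does not help, because even a monotone rounding compresses a $2$-unit shift of $\eta_{i^*}$ into a $2/\gamma$-unit shift of the output, and the change of variable (b) runs into the hypothesis of \eqref{eq:sensitivity}, which only covers arguments differing by at most $1$, not by $1$ plus the re-indexing shift. The paper instead keeps all of the $\pm 1$ bookkeeping in the rounded domain: it writes $\Pr[\mathcal{M}(\vectorize{x})=i]$ as a sum over the value $y$ of the rounded max (with the tie-breaking indicator $[i'>i]$), invokes the tail inequality $\Pr[\roundlargeparantheses{(\vectorize{x}_i+\vectorize{\eta}_i)/\gamma}\ge y]\le e^\varepsilon\,\Pr[\roundlargeparantheses{(\vectorize{x}_i+\vectorize{\eta}_i)/\gamma}\ge y+2]$ for the \emph{composed} random variable, and then spends the two extra rounded units through \eqref{eq:sensitivity} — one for the change in the target coordinate and one for the change in the max. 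The step you would therefore need to supply is precisely that two-level tail inequality for $\roundlargeparantheses{(x+\eta)/\gamma}$ itself, where the interaction between the geometric noise and the specific share-based randomized rounding must be analyzed; your threshold comparison in the $\eta$-domain does not establish it.
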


\begin{lemma}\label{lemma:MPC-error}
	Algorithm~\ref{alg:private-arg-max} has error at most $2 \gamma \Delta + 4\ln(d) / \varepsilon$ with probability at least $1-1/d$.
\end{lemma}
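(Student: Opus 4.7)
The plan is to compare the entry at the returned index $i$ with the entry at the true maximizer $i^{*} = \arg\max_{\ell} x_\ell$, using the fact that the algorithm chooses $i$ only because $w_i \geq w_{i^{*}}$. Noise and rounding errors together control how much $x_{i^{*}} - x_i$ can be.

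First, I would unfold the definitions: since $\round{y}$ differs from $y$ by at most $\Delta$, the inequality $w_i \geq w_{i^{*}}$ implies
\begin{equation*}
\frac{x_i + \eta_i}{\gamma} + \Delta \;\geq\; w_i \;\geq\; w_{i^{*}} \;\geq\; \frac{x_{i^{*}} + \eta_{i^{*}}}{\gamma} - \Delta .
\end{equation*}
Multiplying through by $\gamma$ and rearranging gives
\begin{equation*}
x_{i^{*}} - x_i \;\leq\; \eta_i - \eta_{i^{*}} + 2\gamma\Delta \;\leq\; \max_\ell \eta_\ell + 2\gamma\Delta,
\end{equation*}
where the last step uses $\eta_{i^{*}} \geq 0$, which holds because the one-sided geometric distribution is supported on the non-negative integers.

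Next, I would control the tail of $\max_\ell \eta_\ell$. For $\eta \sim \Geo(1 - e^{-\varepsilon/2})$ we have $\Pr[\eta \geq t] = e^{-\varepsilon t / 2}$ for integer $t \geq 0$. A union bound over the $d$ coordinates yields $\Pr[\max_\ell \eta_\ell \geq t] \leq d\, e^{-\varepsilon t/2}$. Choosing $t = 4 \ln(d)/\varepsilon$ makes the right-hand side exactly $1/d$, so with probability at least $1 - 1/d$ we have $\max_\ell \eta_\ell \leq 4\ln(d)/\varepsilon$. Combined with the previous displayed inequality, this gives $x_{i^{*}} - x_i \leq 2\gamma\Delta + 4\ln(d)/\varepsilon$ with probability at least $1 - 1/d$, which is the claim.

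There is no real obstacle in this proof: the only subtlety is carefully applying the rounding bound in both directions (losing $\Delta$ on each side to get $2\Delta$, which becomes $2\gamma\Delta$ after undoing the division by $\gamma$), and pinning down the geometric-tail constants so that the exponent $\varepsilon/2$ in the noise parameter and the target failure probability $1/d$ together yield the clean threshold $4\ln(d)/\varepsilon$.
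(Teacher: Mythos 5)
Your proof is correct and follows essentially the same route as the paper's: both arguments combine a union bound over the $d$ coordinates of the geometric tail $\Pr[\eta \geq t] \leq e^{-\varepsilon t/2}$ at $t = 4\ln(d)/\varepsilon$ with the chain of implications from $w_i \geq w_{i^*}$, losing $\Delta$ on each side of the rounding and using non-negativity of the noise to drop $\eta_{i^*}$. The only cosmetic difference is that the paper bounds $\eta_i - \eta_{i^*}$ by $\|\vectorize{\eta}\|_\infty$ at the end rather than in the displayed inequality, which changes nothing.
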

\begin{proof}
By a union bound, $\Pr[\|\vectorize{\eta}\|_\infty > 4\ln(d)/\varepsilon ] \leq d \Pr[\vectorize{\eta}_i > 4 \ln(d)/\varepsilon ] < 1/d$.
Let $\mathcal{M}(\vectorize{x})$ denote the output of Algorithm~\ref{alg:private-arg-max}, where $\vectorize{x} = \sum_{j=1}^d \vectorize{x}^{(j)}$ is the sum of the input vectors.
We want to argue that the error $|\vectorize{x}_{\mathcal{M}(\vectorize{x})} - \max_\ell(\vectorize{x}_\ell)|$ is not too large.
Abbreviating $i = \mathcal{M}(\vectorize{x})$, $j = \arg\max_\ell(\vectorize{x}_\ell)$, and using that entries in $\vectorize{\eta}$ are non-negative, we have
\begin{align*}
	\roundlargeparantheses{\frac{\vectorize{x}_j + \vectorize{\eta}_j}{\gamma}} \leq  \roundlargeparantheses{\frac{\vectorize{x}_i + \vectorize{\eta}_i}{\gamma}}
	& \Rightarrow \frac{\vectorize{x}_j + \vectorize{\eta}_j}{\gamma} - \Delta \leq \frac{\vectorize{x}_i + \vectorize{\eta}_i}{\gamma} + \Delta\\
	& \Rightarrow \vectorize{x}_j + \vectorize{\eta}_j - (\vectorize{x}_i + \vectorize{\eta}_i) \leq 2 \gamma \Delta\\
	& \Rightarrow |\vectorize{x}_{\mathcal{M}(\vectorize{x})} - \max_\ell(\vectorize{x}_\ell)| \leq 2 \gamma \Delta + \|\vectorize{\eta}\|_\infty \enspace . \qedhere
\end{align*}
\end{proof}

\section{Secure computation of differentially private selection}\label{sec:mpc-protocol}

As it is common in the MPC literature, we first describe \emph{what} we want to achieve in the form of an idealized algorithm, as it if was executed by some trusted third party---usually referred to as the ``ideal functionality''. This algorithm formally captures the computation that the distributed protocol will perform, as well as what kind of information is leaked to the adversary, while hiding the details on \emph{how} the distributed protocols achieves this result. This ideal functionality, provided in Algorithm~\ref{alg:our-mpc-algorithm}, has a small deviation from Algorithm~\ref{alg:our-algorithm}; in particular, it adds a larger amount of noise sampled from a negative binomial distribution (some of which is leaked).
Such distributed addition of noise has been used before in similar settings~\cite{goryczka_comprehensive_2017}.
The increased level of noise allows us to perform a very simple and efficient distributed noise generation. Moreover, the noise leaked by the functionality is used to capture the fact that, in the distributed implementation of the algorithm, up to $\thr$ servers might be corrupted by a semi-honest adversary.
We use $[n]$ to denote the set $\{ 1, \dotsc, n \}$.

\begin{algorithm}[tb]
	\caption{\expandafter\MakeUppercase\ourmpcalgorithm (The ``Ideal Functionality'')}
	\label{alg:our-mpc-algorithm}
	\begin{algorithmic}[1]
		\State \textbf{Input:} $\vectorize{x}^{(1)},\dots,\vectorize{x}^{(n)}\in  \{0,1\}^d$
		\label{mpcalg:input}

		\State \textbf{Parameters:} $\geoparam$ (noise parameter), $\bitstotruncate$ (bits to truncate), $\servers$ (number of servers), $\thr$ (upper bound on corrupted servers),  
		\label{mpcalg:parameters}
		\State for all $j\in[\servers]$ sample $\vectorize{r}^{(j)} \sim \text{NB}^{d}(1/(\servers-\thr),\geoparam) $\label{mpcalg:draw-more-noise}
		\State $\vectorize{z}\leftarrow \sum_{i\in [n]}\vectorize{x}^{(i)} + \sum_{j\in[\servers]}\vectorize{r}^{(j)}$ \label{mpcalg:add-more-noise}
		\State $\vectorize{w} \leftarrow \round{\vectorize{z}/2^\bitstotruncate}$
		\State \textbf{Output:} $\arg\max_i(w_i)$
		\State \textbf{Leakage:} $\vectorize{r}^{(j)}$ for $j\in[\thr]$ (capturing that the corrupted parties contribution to the noise are known to the adversary.)
	\end{algorithmic}\label{mpcalg:private-arg-max-func}
\end{algorithm}

\begin{lemma}
	Algorithm~\ref{alg:our-mpc-algorithm} with $p=1-e^{-\varepsilon/2}$ and $\gamma=2^\trunc$ is $\varepsilon$-differentially private, even if the leakage is considered part of the output. It has error at most $2 \gamma \Delta + 16\ln(d) / \varepsilon$ with probability at least $1-2/d$.
\end{lemma}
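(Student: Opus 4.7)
The strategy is to reduce both claims to the corresponding statements for Algorithm~\ref{alg:our-algorithm} (Lemmas~\ref{lemma:MPC-privacy} and~\ref{lemma:MPC-error}) by exploiting the infinite divisibility of the negative binomial distribution. The key fact is that a sum of $m$ independent $\text{NB}(r/m, \geoparam)$ variables has distribution $\text{NB}(r, \geoparam)$; in particular, summing $\servers - \thr$ independent $\text{NB}(1/(\servers - \thr), \geoparam)$ vectors yields exactly $\text{Geometric}(\geoparam)^d$, the noise distribution used in Algorithm~\ref{alg:our-algorithm}.

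For privacy, I would first observe that the leaked vectors $\vectorize{r}^{(1)}, \ldots, \vectorize{r}^{(\thr)}$ are sampled independently of the input and so by themselves contribute no privacy loss. Conditioning on an arbitrary value of the leakage, set $\vectorize{s} = \sum_{j \in [\thr]} \vectorize{r}^{(j)}$; by the divisibility fact, the honest contribution $\sum_{j = \thr+1}^{\servers} \vectorize{r}^{(j)}$ is distributed as $\text{Geometric}(\geoparam)^d$. Therefore, conditioned on the leakage, lines~4--6 of Algorithm~\ref{alg:our-mpc-algorithm} perform exactly Algorithm~\ref{alg:our-algorithm} (with parameters $\varepsilon$ and $\gamma = 2^{\bitstotruncate}$) on the shifted input $\sum_i \vectorize{x}^{(i)} + \vectorize{s}$. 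Since $\vectorize{s}$ is a constant independent of the dataset, neighboring input datasets produce shifted inputs that are still neighboring, and Lemma~\ref{lemma:MPC-privacy} yields $\varepsilon$-DP for the argmax output. Combining this conditional $\varepsilon$-DP with the data-independent leakage gives joint $\varepsilon$-DP.

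For utility, the total noise $\vectorize{\eta} = \sum_{j=1}^{\servers} \vectorize{r}^{(j)}$ has marginals distributed as $\text{NB}(\servers/(\servers - \thr), \geoparam)$. The assumption $\thr < \servers/2$ gives $\servers/(\servers - \thr) < 2$, and infinite divisibility yields a coordinate-wise stochastic upper bound by $\text{NB}(2, \geoparam)$, i.e., by the sum $Y_1 + Y_2$ of two i.i.d.\ $\text{Geometric}(\geoparam)$ variables. Then for $C = 16 \ln(d)/\varepsilon$ and $\geoparam = 1 - e^{-\varepsilon/2}$,
\[
\Pr[Y_1 + Y_2 \geq C] \leq 2\,(1 - \geoparam)^{C/2} = 2\,e^{-4 \ln d} = 2/d^4 ,
\]
and a union bound over the $d$ coordinates gives $\Pr[\|\vectorize{\eta}\|_\infty > 16 \ln(d)/\varepsilon] \leq 2/d^3 \leq 2/d$. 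Plugging this tail bound into the chain of inequalities from the proof of Lemma~\ref{lemma:MPC-error} (which only used non-negativity of noise entries and the bound on $\|\vectorize{\eta}\|_\infty$) produces the claimed error $2 \gamma \Delta + 16 \ln(d)/\varepsilon$ with probability at least $1 - 2/d$.

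The main obstacle is arguing the privacy reduction cleanly: one must verify that conditioning on the leakage preserves the neighboring relation on the shifted inputs and that the joint distribution of output and leakage inherits $\varepsilon$-DP rather than a weaker guarantee; this also relies on stochastic monotonicity of $\text{NB}(r, \geoparam)$ in $r$, obtained from infinite divisibility. The utility part is then an essentially routine rerun of Lemma~\ref{lemma:MPC-error} with the new tail bound.
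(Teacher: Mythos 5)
Your proof is correct, and the privacy half is essentially the paper's own argument: condition on the (data-independent) leaked noise vectors, use the fact that the sum of the $\servers-\thr$ honest contributions $\text{NB}(1/(\servers-\thr),\geoparam)$ is exactly $\text{Geometric}(\geoparam)^d$, and invoke Lemma~\ref{lemma:MPC-privacy} on the input shifted by the fixed leaked noise, which preserves the neighboring relation. The utility half takes a genuinely different (and arguably cleaner) route. The paper splits the noise into the honest part $\vectorize{\eta}\sim\text{Geometric}(\geoparam)^d$ and the leaked part $\vectorize{\tilde{\eta}}$, reruns the chain of inequalities relative to the shifted input, arrives at $2\gamma\Delta+2\,\|\vectorize{\eta}\|_\infty+2\,\|\vectorize{\tilde{\eta}}\|_\infty$, and bounds each norm by $4\ln(d)/\varepsilon$ separately (the leaked part because $\text{NB}(\thr/(\servers-\thr),\geoparam)$ is dominated by a geometric). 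You instead treat the \emph{total} noise, whose marginals are $\text{NB}(\servers/(\servers-\thr),\geoparam)$ with $\servers/(\servers-\thr)<2$, dominate it by a sum of two independent geometrics, and run the chain of Lemma~\ref{lemma:MPC-error} once with a single tail bound at $16\ln(d)/\varepsilon$. Both routes rest on the same two facts---infinite divisibility/stochastic monotonicity of the negative binomial and the geometric tail bound---but your bookkeeping avoids the paper's loose $2\,\|\cdot\|_\infty$ factors and yields the same error bound with the sharper failure probability $2/d^3$ in place of $2/d$.
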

\begin{proof}
	By symmetry we can assume that the leakage consists of the noise added by the first $t$ parties, i.e., $\vectorize{r}^{(j)}$ for $j\in [t]$.
	Consider any fixed value of the leaked noise vectors---we will argue that the algorithm is $\varepsilon$-differentially private under the distribution induced by the remaining $k-t$ noise vectors.
	As before, let $\vectorize{x} = \sum_{j=1}^d \vectorize{x}^{(j)}$.
	After Line~\ref{mpcalg:add-more-noise} we have
	\[\vectorize{z} = \vectorize{x} + \sum_{j\in[\servers]}\vectorize{r}^{(j)} = \left(\vectorize{x} + \sum_{j\in[t]}\vectorize{r}^{(j)}\right) + \sum_{j\in[\servers]\backslash [t]}\vectorize{r}^{(j)},\]
	where $\vectorize{\eta} = \sum_{j\in[\servers]\backslash [t]}\vectorize{r}^{(j)} \sim \text{Geometric}(p)^d$ since it is a sum of $k-t$ negative binomials $\text{NB}(\tfrac{1}{k-t}, p)$ (see e.g.~\cite{goryczka_comprehensive_2017}).
	Since $p=1-e^{-\varepsilon/2}$ this means that Algorithm~\ref{alg:our-mpc-algorithm} has the same output distribution as Algorithm~\ref{alg:private-arg-max} applied to an input with sum $\vectorize{\tilde{x}} = \vectorize{x} + \vectorize{\tilde{\eta}}$, where $\vectorize{\tilde{\eta}} = \sum_{j\in[t]}\vectorize{r}^{(j)}$ is the additional noise added by the first $t$ parties.
	Since neighboring input sums $\vectorize{x} \sim \vectorize{x}'$ translate to neighboring input sums $\vectorize{\tilde{x}} \sim \vectorize{\tilde{x}}'$ we conclude that Algorithm~\ref{alg:our-mpc-algorithm} is $\varepsilon$-differentially private.

	Abbreviating $i = \mathcal{M}(\vectorize{\tilde{x}})$ and $j = \arg\max_\ell(\vectorize{x}_\ell)$ we have, similar to the proof of Lemma~\ref{lemma:MPC-error},
	\begin{align*}
		\roundlargeparantheses{\frac{\vectorize{\tilde{x}}_j + \vectorize{\eta}_j}{\gamma}} \leq  \roundlargeparantheses{\frac{\vectorize{\tilde{x}}_i + \vectorize{\eta}_i}{\gamma}}
		& \Rightarrow \vectorize{\tilde{x}}_j - \vectorize{\tilde{x}}_i \leq 2 \gamma \Delta + \vectorize{\eta}_i - \vectorize{\eta}_j\\
		& \Rightarrow \vectorize{x}_j - \vectorize{x}_i \leq 2 \gamma \Delta + \vectorize{\eta}_i - \vectorize{\eta}_j - \vectorize{\tilde{\eta}}_j + \vectorize{\tilde{\eta}}'_i\\
		& \Rightarrow |\vectorize{x}_{\mathcal{M}(\vectorize{x})} - \max_\ell(\vectorize{x}_\ell)| \leq 2 \gamma \Delta + 2\, \|\vectorize{\eta}\|_\infty + 2\, \|\vectorize{\tilde{\eta}}\|_\infty \enspace .
	\end{align*}
	Since $\|\vectorize{\eta}\|_\infty > 4\ln(d)/\varepsilon$ and $\|\vectorize{\tilde{\eta}}\|_\infty > 4\ln(d)/\varepsilon$ each happen with probability at most $1/d$ (the latter because the sum is dominated by a geometric distribution with parameter $p$) we are done.
\end{proof}

\subsection{Secret-sharing: notation and techniques}

Our distributed protocol is performed by $\servers$-servers denoted by $\Servers = \{ \Server_1, \dots , \Server_\servers \}$. We assume that at most $\thr$ of them are corrupted by a semi-honest adversary (i.e., they follow the protocol specifications but then will try to infer more information by collecting their data) with $\servers=2\cdot \thr+1$. We let $\hon=\servers-\thr=\thr+1$ be the minimum number of guaranteed honest servers. As it is common in the secure multipary computation literature, we assume a single, monolithic adversary that controls all corrupted parties and collects all their internal states. This can be thought of as an adversary who has installed ``spyware'' on the corrupted servers: the adversary is able to observe everything that the servers observe, but not to change the code they are running. Finally, the servers will have slightly asymmetric roles in the protocol. The first $\hon$ servers are called the \emph{computation servers}, whereas the last $\thr$ servers are called the \emph{supporting servers} (note that by our assumptions on $\servers$ and $\thr$, at least one computation server is guaranteed to be honest, while we can tolerate that all the supporting servers might be dishonest).

We use an additive integer secret sharing scheme among the computing servers $\party{1},\party{2},\dots,\party{\hon}$. We use $\ssi{x}$ to denote a secret sharing of some integer $x$, consisting of shares $x_1,\dots,x_\hon\in\mathbb{Z}$ such that $\sum_{i=1}^{\hon}x_i=x$. For every $i\in[\hon]$, $\party{i}$ has $x_i$. In order to securely share an $\ell$-bit long secret, we need that the shares are chosen uniformly at random among integers with $\ell+\kappa$ bits. This results in statistical security with negligible security error $2^{-\kappa}$ against any adversary, even if computationally unbounded. That is, the security of our distributed protocol does not rely on any computational assumption.
Our distributed protocol performs additions and truncation of integer secret sharings, which are detailed in Algorithm~\ref{alg:iss}.

\begin{algorithm}[tb]
	\caption{Primitives for Integer Secret Sharing}
	\label{alg:iss}
	\begin{algorithmic}[1]
		\State \textbf{Addition.} $\ssi{z}\leftarrow \ssi{x}+\ssi{y}$ means that each server $\party{i}$ locally adds their shares, i.e., $z_i=x_i + y_i$ leading to $z=x+y$.
		\State \textbf{Truncation.} $[y]_\Z\leftarrow\truncate_\truncerr([x]_\Z,\bitstotruncate)$ means that each server  $\party{i}$ locally computes $y_i=\left\lfloor x_i/2^{\trunc} \right\rceil$ for all $i\in[\hon]$, removing the least significant $\trunc$ bits from each share $x_i$ and rounding, leading to $ x/2^{\trunc} -\truncerr \leq y\leq  x/2^{\trunc}  + \truncerr$, for a value $\truncerr$ analyzed below.
        \State \textbf{Conversion.} $\ssr{y}{a}\leftarrow \convert(\ssi{x})$ means that each server $\party{i}$ locally computes $y_i=x_i\bmod 2^a$, leading to $y=x$ assuming $x\leq2^a$  This is correct because $\sum_{i\in[h]} (x_i\bmod 2^a)\bmod 2^a = \sum_{i\in[h]} x_i\bmod 2^a  = x\bmod 2^a$.
	\end{algorithmic}
\end{algorithm}

\paragraph{Truncation error.} Here we analyze $\truncerr=\lvert x/2^\trunc-\sum_{i\in[\hon]}\left\lfloor x_i/2^\trunc\right\rceil\rvert$, the possible error incurred by truncation. The error depends on $\hon$, the number of shares of the secret. Consider the case of $\hon=2$: if the input is secret shared among two servers, at most one carry bit may be missed when truncating the lower order bits. To generalize to larger $\hon$, first observe that division and rounding incurs an error of at most $\truncparterr_i = \lvert x_i/2^\trunc-\left\lfloor x_i/2^\trunc\right\rceil \rvert \leq 1/2$. For shared integer $x$ and shares $x_1,\dots,x_\hon$, when we divide $x/2^\trunc$, we can write the result as $x_1/2^\trunc+x_2/2^\trunc+\dots+x_\hon/2^\trunc$. Then we can formulate the total error $\truncerr=\lvert \sum_{i\in[\hon]} x_i/2^\trunc-\left\lfloor x_i/2^\trunc\right\rceil \rvert \leq  \sum_{i\in[\hon]}  \truncparterr_i  \leq \hon/2$ by the triangle inequality and then applying our bound for $\truncparterr_i$. Notice that $\truncate_\truncerr([x]_\Z,\bitstotruncate)$ exactly implements $\round{x/2^\bitstotruncate}$ with $\truncerr = \hon/2$.

\subsection{A secure and differentially private distributed protocol for selection}

We are finally ready to describe, in Algorithm~\ref{alg:our-mpc-protocol}, a secure distributed implementation of the differentially private mechanism from Algorithm~\ref{mpcalg:private-arg-max-func} (the ``ideal functionality'').
The protocol proceeds as follows: In Line~\ref{prot:draw-more-noise}, all servers (computing and supporting) locally sample noise according to the negative binomial distribution, with parameter inversely proportional to the number of honest parties. The supporting servers need now to share their noise contribution to the computing servers in Line~\ref{prot:secret-share} (this can be done assuming using shares of size $\kappa+\log(n)$ assuming $\log(n)$ as an upper bound on the noise magnitude). In Line~\ref{prot:add} the computing servers  exploit the linear nature of the secret sharing scheme to locally aggregate the input vectors and all noise contributions, in secret shared form. To do so, they each add all input shares and noise shares received from the supporting parties, as well as their own randomly generated noise. To increase efficiency the result is then truncated in Line~\ref{prot:truncate}, by removing the lowest $\trunc$ bits (essentially dividing every value by $2^\trunc$).
The secret-sharing are then converted from integer to modular form in Line~\ref{prot:convert}, to be compatible with the the secure \Argmax protocol which is invoked in Line~\ref{prot:argmax}. This protocol consumes correlated randomness which is generated by all servers during a preprocessing phase in Line~\ref{prot:preprocessing}. More details on how the  \Argmax protocol and its preprocessing are implemented are given in Section~\ref{sec:argmax}.

\begin{algorithm}[tb]
	\caption{\expandafter\MakeUppercase\ourmpcprotocol (The MPC Protocol)}
	\begin{algorithmic}[1]
		\State \textbf{Input:} Integer secret-sharings $\ssi{\vectorize{x}^{(1)}},\dots,\ssi{\vectorize{x}^{(n)}}$ representing values in $\{0,1\}^d$
		\label{prot:input}

		\State \textbf{Parameters:} $\geoparam$ (noise parameter), $\bitstotruncate$ (bits to truncate), $\servers$ (number of servers), $\thr$ (upper bound on corrupted servers), $\kappa$ (security parameter used in integer secret sharing), $a=\log(n)-c+1$ (bits for modular secret sharing) 
		\label{prot:parameters}
		\State $\crand \leftarrow \preprocessing(\Server_1,\ldots,\Server_\servers)$ \protcomm{All servers produce correlated randomness} \label{prot:preprocessing}
		\State $\forall j\in[\servers]$, $\Server_j$  samples $\vectorize{r}^{(j)} \sim \text{NB}^{d}(1/(\servers-\thr),\geoparam) $ \protcomm{All servers sample noise} \label{prot:preprocessing}\label{prot:draw-more-noise}
		\State $\forall j\in [\thr+2,\servers]$, $\Server_j$ secret-shares $\vectorize{r}^{(j)}$ as $\ssi{\vectorize{r}^{(j)}}$ and send the corresponding shares to $\party{1},\ldots,\party{h}$. \protcomm{Supporting servers share their noise}\label{prot:secret-share}
		\State $\party{1},\ldots,\party{\hon}$ evaluate $\ssi{\vectorize{z}} = \ssi{\sum_{i\in n}\vectorize{x}^{(i)}+\sum_{j\in[\servers]}\vectorize{r}^{(j)}}$. \protcomm{Computing servers aggregate inputs and noise their noise}\label{prot:add}
		\State $\party{1},\ldots,\party{\hon}$ compute $\ssi{\vectorize{y}}=\truncate_\truncerr([\vectorize{z}]_\Z,\bitstotruncate)$ \protcomm{Computing servers truncate the result} \label{prot:truncate}
		\State $\party{1},\ldots,\party{\hon}$ convert $\ssr{\vectorize{y}}{\powertwo}\leftarrow\convert(\ssi{\vectorize{y}})$ \protcomm{Computing servers convert to modular sharings} \label{prot:convert}
		\State $\party{1},\ldots,\party{\hon}$ execute $\ssr{\out}{\powertwo} \leftarrow \Argmax(\ssr{\vectorize{y}}{\powertwo}, \crand)$. \protcomm{Computing servers securely evaluate \Argmax} \label{prot:argmax}
		\label{mpcalg:private-arg-max}
        \State  \textbf{Output:} Open and output $\out=\text{argmax}_{j\in[d]}\ssr{\vectorize{y}}{\powertwo}$ \label{prot:output}
	\end{algorithmic}\label{alg:our-mpc-protocol}
\end{algorithm}

\paragraph*{Correctness.} We argue that the output of Algorithm~\ref{alg:our-mpc-protocol} has the same distribution as the one in the ideal functionality specified in Algorithm~\ref{mpcalg:private-arg-max-func}. First, note that the inputs are a secret shared version of the same inputs for the ideal functionality. In Line~\ref{prot:draw-more-noise}, noise is drawn according to the same distribution specified in Line~\ref{mpcalg:draw-more-noise} of Algorithm~\ref{mpcalg:private-arg-max-func}. Secret sharing and addition performed in Lines~\ref{prot:secret-share} and~\ref{prot:add} correctly add the input values and random samples. In Line~\ref{prot:truncate} we truncate using the secret shared version of $\truncate$ with the same output in secret shared form, and in Line~\ref{prot:convert} the conversion to secret sharing over a ring from Algorithm~\ref{alg:iss} is applied, and $a$ is chosen to be of appropriate size for this conversion to be lossless. Lastly, correctness of the $\Argmax$ protocol used in Line~\ref{mpcalg:private-arg-max} guarantees that the algorithm outputs the correct argmax value.

\paragraph*{Security.} Intuitively, security of the distributed protocol follows from the fact that the entire computation is performed over secret-shared values and that all employed sub-protocols are secure. More precisely, as it is common in the MPC literature, we can prove that the protocol is secure by providing a simulator that, given access to the input/output of the ideal functionality (including the leakage) simulates the view of the corrupted servers in the execution of the protocol. In our case the simulator, which takes as input the set of corrupted servers, and their inputs/outputs, will simulate the view of the corrupted servers essentially by running an execution of the real protocol but where the shares of all the honest parties are set to some dummy value (e.g., $0$). The view of the corrupted servers contains all their shares and all the messages that they receive from the honest servers. This includes the messages that they receive from the honest servers in the $\preprocessing$ phase which, by assumption on the security of the $\preprocessing$ protocol, can be efficiently simulated. The view contains also the shares of the noise generated by honest supporting servers in Line~\ref{prot:draw-more-noise} which can be simulated (with statistical security $2^{-\kappa}$) by picking uniform random shares of the same size $\log(n)+\kappa$ bits as in the protocol. The Lines~\ref{prot:add}-\ref{prot:convert} only consist of local computation and can therefore be trivially simulated. Note however that, due to the local addition of the noise by the computing servers, the shares of $\ssr{\vectorize{y}}{\powertwo}$ at the end of Line~\ref{prot:convert} might not be uniformly random. This does not matter, since the shares are never revelaed but instead used as input in the secure $\Argmax$ sub-protocol, which secure as shown in~\cite{C:EGKRS20} (and in particular, internally, only reveals results of secure comparison protocols).
Overall, the protocol in Algorithm~\ref{alg:our-mpc-protocol} can be efficiently simulated with statistical error $2^{-\kappa}$ (due to the statistical security of the integer secret-sharing scheme) having access to the ideal functionality specified in Algorithm~\ref{mpcalg:private-arg-max-func}. This leads to the following:

\begin{corollary}
Algorithm~\ref{alg:our-mpc-protocol} with $p=1-e^{-\varepsilon/2}$ is $(\varepsilon,2^{-\kappa})$-differentially private in the view of an adversary that semi-honestly corrupts any $\thr$ servers. It has the same error as Algorithm~\ref{alg:our-mpc-algorithm}.
\end{corollary}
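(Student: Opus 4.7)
The plan is to prove the corollary via the standard simulation paradigm of secure MPC combined with the $\varepsilon$-DP guarantee already established for the ideal functionality in the preceding Lemma. Concretely, I will exhibit an efficient simulator $\simu$ that, given only the inputs of the corrupted servers, the output of Algorithm~\ref{alg:our-mpc-algorithm}, and its prescribed leakage $\{\vectorize{r}^{(j)}\}_{j\in[\thr]}$, produces a transcript that is statistically $2^{-\kappa}$-close to the corrupted servers' view in a real execution of Algorithm~\ref{alg:our-mpc-protocol}. Combining this simulation guarantee with the $\varepsilon$-DP of the ideal functionality then yields the claimed $(\varepsilon, 2^{-\kappa})$-DP in the real world, and the matching error bound follows from correctness.

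First, I would fix an arbitrary corrupted set of size at most $\thr$, and by symmetry relabel so that the leaked vectors are precisely the noise contributions of the corrupted servers. The simulator $\simu$ then walks through the protocol line-by-line: for $\preprocessing$ it invokes the simulator of the honest-majority correlated-randomness protocol of~\cite{TCC:ACDEY19}; for each share of $\ssi{\vectorize{r}^{(j)}}$ sent by an honest supporting server in Line~\ref{prot:secret-share} it substitutes a uniformly random integer of bit-length $\log(n)+\kappa$, which is statistically $2^{-\kappa}$-close to the real share by the security of integer secret sharing; the local operations in Lines~\ref{prot:add}--\ref{prot:convert} are performed on the simulated shares exactly as in the protocol; finally, the $\Argmax$ sub-protocol is replaced by the simulator of~\cite{C:EGKRS20}, programming the opening in Line~\ref{prot:output} to equal the output $\out$ delivered by the ideal functionality. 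A standard hybrid argument over these sub-protocols bounds the distinguishing advantage between the real and simulated views by $2^{-\kappa}$, absorbing the cryptographic security losses into $\kappa$.

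Next, I would conclude $(\varepsilon,2^{-\kappa})$-DP via a short chain of inequalities. For any pair of neighbouring inputs and any set $Z$ of views, the real and simulated distributions are within statistical distance $2^{-\kappa}$, and since $\simu$'s output is a function of the ideal-functionality output (which is $\varepsilon$-DP by the preceding Lemma), post-processing and the simulation error combine to give
\[
\Pr[\view \in Z] \;\leq\; e^{\varepsilon}\,\Pr[\view' \in Z] \;+\; 2^{-\kappa},
\]
with the multiplicative constant in front of $2^{-\kappa}$ absorbed by a mild adjustment of $\kappa$. The error claim is inherited immediately: conditioned on all sub-protocols behaving correctly (which holds except with statistical error already subsumed by $2^{-\kappa}$), the output of Algorithm~\ref{alg:our-mpc-protocol} is distributed identically to that of Algorithm~\ref{alg:our-mpc-algorithm} by the correctness argument preceding the corollary, so the bound $2\gamma\Delta + 16\ln(d)/\varepsilon$ with probability $1-2/d$ carries over verbatim.

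The main obstacle I anticipate is arguing that the non-uniformity of the shares $\ssr{\vectorize{y}}{\powertwo}$ at the input to $\Argmax$ does not break simulatability: I would address this by noting that these shares are never opened in the clear; they are only fed into the $\Argmax$ sub-protocol of~\cite{C:EGKRS20}, whose simulator is guaranteed to work for arbitrary (even adversarially chosen) input shares. A secondary subtlety is parameter bookkeeping, namely verifying that $\powertwo=\log(n)-\trunc+1$ bits suffice to make $\convert$ lossless except on a negligible tail of the negative-binomial noise distribution, so that no error beyond the truncation slack $\truncerr=\hon/2$ (already folded into $\Delta$) is introduced by the conversion step.
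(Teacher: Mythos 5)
Your proposal is correct and follows essentially the same route as the paper: a simulation argument (dummy/uniform shares for honest parties, sub-protocol simulators for the preprocessing and \Argmax steps, the observation that the non-uniform shares entering \Argmax are never opened) yielding statistical closeness $2^{-\kappa}$ to the ideal functionality's view, combined with the $\varepsilon$-DP of that functionality including its leakage, and the error bound inherited via correctness. Your write-up is somewhat more explicit than the paper's about the hybrid argument and the final DP inequality, but the decomposition and key ideas are identical.
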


\subsection{Details on the \Argmax protocol and preprocessing}
\label{sec:preprocessing}
\label{sec:argmax}

There are multiple possible approaches for computing the exact argmax within an MPC protocol. We choose the state-of-the-art solution, which is to use a tree data structure, where the maximum of two values is compared to the maximum of two other values in each step. This approach requires $\mathcal{O}(d)$ comparisons when finding the argmax of $d$ values.
To perform the comparisons we use in turn the integer comparison protocol of~\cite{C:EGKRS20}, which requires the randomness $\crand$ generated in the precomputation phase.

We proceed now to describe the necessary correlated randomness to execute the comparison from~\cite{C:EGKRS20}, and how to generate it: we let all $\servers$ servers collaborate in producing the correlated randomness. This allows us to achieve unconditional security (thanks to the honest majority assumption) but also to achieve high efficiency using the the protocol from \cite{TCC:ACDEY19}. This protocol allows us to perform
MPC over $\Z_{2^\powertwo}$.
In a nutshell, their idea is to consider a so called Galois extension $R$ of $\Z_{2^\powertwo}$. In the ring $R$ we can do Shamir-style secret sharing (of values in
$\Z_{2^\powertwo}$) and follow the standard blueprint for honest majority MPC, to perform secure addition and multiplication.
This implies an overhead factor $\log_2(k)$, which is necessary as Shamir-style secret sharing cannot be done over $\Z_{2^\powertwo}$  directly.

The correlated randomness needed by the protocol from  \cite{C:EGKRS20} consists of shared random numbers modulo $2^\powertwo$, together with the bits in these numbers, also in shared form.
Clearly, if we can create shared random bits $[b_0]_{2^\powertwo},...,[b_{2^{\powertwo-1}}]_{2^\powertwo}$, this would be sufficient. Namely, if we let $r$ be the number with binary expansion
$b_0, b_1,..., b_{2^{\powertwo-1}}$, then using only local computation we can construct
\[
[r]_{2^\powertwo} = \sum_{i=0}^{\powertwo-1} 2^i\cdot [b_i]_{2^\powertwo} \enspace .
\]

In order to get a random shared bit, we can use a trick suggested in
\cite{SP:DEFKSV19}. It was shown there how to generate a random shared bit using secure arithmetic modulo a 2-power,
at the cost of a constant number of secure multiplications.
This will generate a sharing $[c]_R$, where $c$ is the random bit and  $[\cdot ]_R$ refers to the secret-sharing scheme from \cite{TCC:ACDEY19}.

Finally, $[c]_R$ can be converted to $[c]_{2^\powertwo}$ using only local computation. Namely, if we let $\lambda_1,...,\lambda_{\hon}$ be the Lagrange coefficients
one would use to reconstruct a secret over $R$, and $s_1,...,s_{\hon}$ be the shares of $c$ held by the first $\hon$ servers, we would have
$c = \sum_{i=1}^{\hon} \lambda_i s_i$. So we can think of the $\lambda_i s_i$-values as additive shares of $c$. Each such share is an element from $R$, but it can
be represented as a vector of $\log_2(\servers)$ numbers
from $\Z_{2^\powertwo}$. Since addition in $R$ is component-wise addition, it turns out that each server can keep only one number from its additive share, discard the rest,
and the result will be~$[c]_{2^\powertwo}$.

To conclude, note that all three protocols in~\cite{SP:DEFKSV19},~\cite{C:EGKRS20} and \cite{TCC:ACDEY19} were originally presented for the malicious security setting, but since we deal with semi-honest corruptions their protocol can be greatly simplified in our setting.

\paragraph{The 3 servers case.}
\label{sec:three-servers}
We note that our protocol can be highly simplified in the case of $\servers=3$. Under the assumption of honest majority this gives $\hon=2$ and $\thr=1$. Thus we have $2$ computing servers and a single supporting server. This means that in Line~\ref{prot:preprocessing} of the protocol we can simply have the supporting server act as a ``dealer'' and produce the correlated randomness locally, and then secret share it among the computation servers, instead of having to run a secure protocol among all $3$ servers to generate the correlated randomness. This still guarantees security since if the dealer is corrupted then both of the computing servers must be honest (by assumption on $\thr\leq 1$).

\section{Empirical evaluation}\label{sec:experiments}

Inspired by the evaluation of the state-of-the-art differentially private selection algorithm  \permuteandflip~\cite{mckenna_permute-and-flip_2020}, we run our benchmarks on the real-world data from DPBench~\cite{dpbench}.
Specifically, we use the same five representative datasets (\Cref{tab:time-communication}, \appendixconditional{full table in \Cref{app:utility-evaluation}}{see supplementary material for full table}), and discretize them to $d=1024$ as in~\cite{mckenna_permute-and-flip_2020} .
To show the scalability of our MPC protocol, we also benchmark performance using synthetic data.

\paragraph{Utility.}

We implement and run our utility benchmarks using Python 3.11.3, measuring error for 1000 runs as the absolute difference between the true argmax value, and the one chosen by the algorithm.
As there are no direct comparisons of differentially private algorithms that use the same trust model (the \multicentralmodel), we compare to differentially private algorithms from both the central model (with stronger trust assumptions), and the local model (with weaker trust assumptions).
Representing the best known error for the centralized model, we show \permuteandflip, as well as the \exponentialmechanism~\cite{mcsherry_mechanism_2007}.
For the local model, we compare to bitwise \randomizedresponse~\cite{warner_randomized_1965}, as used in \rappor~\cite{erlingsson_rappor_2014}.
As a worst case comparison we also show the error of uniformly at random reporting an index as argmax.
Lastly, we show the error of using MPC to compute argmax, without guaranteeing differential privacy, via the use of \secureaggregation~\cite{goryczka_comprehensive_2017}.

In \Cref{fig:utility-plots}, we highlight the error by varying $\varepsilon$ and \remainingbits\ on three of the datasets from DPBench, for all datasets see \appendixconditional{\Cref{app:utility-evaluation}}{the supplemental material}.
We expect \ouralgorithm to perform worse than the centralized algorithms (\permuteandflip, \exponentialmechanism) due to their advantage following from their trust model, and better than the local algorithm (\randomizedresponse) and a purely random choice.
As we can see, \ouralgorithm performs similar to \permuteandflip, and better than the \exponentialmechanism.
When $\varepsilon$ increases, error decreases and subsequently reaches 0 (note that the line disappears because of the log-scale).
Interestingly, low values for $\varepsilon$ cause \randomizedresponse and \secureaggregation aggregation to perform similar to the completely random choice.

\begin{figure*}[tb]
	\centering
	\includegraphics[width=\textwidth]{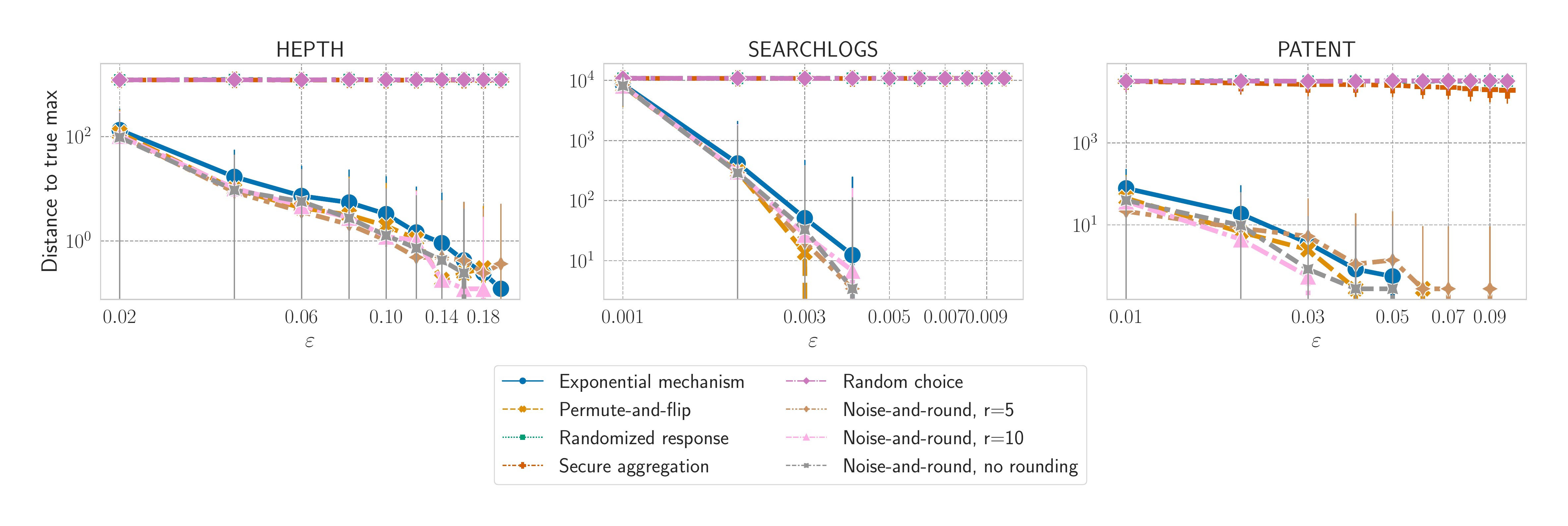}
\caption{Absolute difference between the real chosen argmax on a log-log scale. Lower is better.}
\label{fig:utility-plots}
\end{figure*}

\paragraph{Runtime and communication.}

The bottleneck for MPC in both time and communication lies in the computation of argmax using comparison operations, so we benchmark this part of the protocol. All benchmarks were carried out on AWS t3.xlarge instances, using MP-SPDZ \cite{CCS:Keller20} to implement the protocol in the 3 servers case.
In our experiments we vary the input dimensions ($d$), and the remaining bits ($r$).
We report our results including preprocessing such as multiplication triples, and all time measurements reported are the average of ten executions for the same computation, with standard deviation included either in written form or as error bars for each plot.

For each of three datasets from DPBench, we report the maximum value in each dataset, the number of bits necessary to represent integers in this range, as well as the runtimes and data sent in Table~\ref{tab:time-communication}.
Notice that while communication scales linearly in the number of bits necessary to represent the data, the time necessary for the evaluations are very close, and the variance in measurements is quite high.
The last row in the table reports the necessary time and data necessary when truncating every entry in the dataset to $5$ bits using our approach.
Note that, due to security of MPC protocols, the runtime of the protocol cannot depend on the actual values that are being computed upon, but only their size.
Therefore, the benchmarks after truncation are agnostic of which dataset we start from.
The time and communication reported in the last line of the table correspond to the utility reported for $r=5$ in \Cref{fig:utility-plots}, and the utility of the approach without truncation is reported as well.
We observe that by truncating values, the time and communication necessary for these comparisons is significantly reduced.
Practitioners may choose how many bits to truncate based on their utility and time requirements, as well as the available computational resources.

The average time required per data point $d$ and power of two in the range $r$ is $0.33$ ms, and the average communication is $1.09$ kB. Note that while the communication scales linearly in $d$ and $r$, time scales linearly in $d$ but logarithmically in $r$.
For the chosen range, the complexity can be approximated as linear in $r$ as well.

For synthetic data,
the evaluated ring moduli $2^5, 2^{10}, 2^{15}, 2^{20},$ and $2^{25}$ could correspond either to different value ranges in a dataset before truncation or the resulting range of values after truncation.
Based on experiments using synthetic data with sizes 16, 1024, 2048, 4096, and 8192, Figure~\ref{fig:vary_n_d} confirms the linear growth of necessary time and communication in $d$, as well as the logarithmic growth of time and linear growth in required communication in $r$.
As expected, the savings in cost and communication by performing truncation increases with the size of the dataset and the range of values.
Truncating even a few bits results in significant savings in communication and time, particularly when the dataset has several thousands of entries.

\begin{table} [tb]
       \caption{Benchmark results for given input}\label{tab:time-communication}%
        \centering
          \begin{tabularx}{0.9\textwidth}{lrrrr}
              \toprule
             &  &  & \multicolumn{2}{c}{Results}   \\\cmidrule{4-5}
              Dataset & Max value ($n$) &  \# Bits & Time (s) & Data sent (MB) \\
              \midrule
              PATENT & 59602 & 16  &3.83, std=0.18  & 2.75 \\
              SEARCHLOGS & 11160 & 14  &3.79, std=0.14  & 2.48  \\
              HEPTH &  1571 & 11 & 3.80, std=0.11 & 2.07  \\
              \midrule
              Truncated, $\alpha = 0.125$ & 31 & 5  & 2.80, std=0.10  & 1.78  \\
             \bottomrule
          \end{tabularx}
  \end{table}

  \begin{figure*}[tb]
	\centering
	\begin{subfigure}{0.49\linewidth}
        \centering

	    	\includegraphics[width=0.49\linewidth]{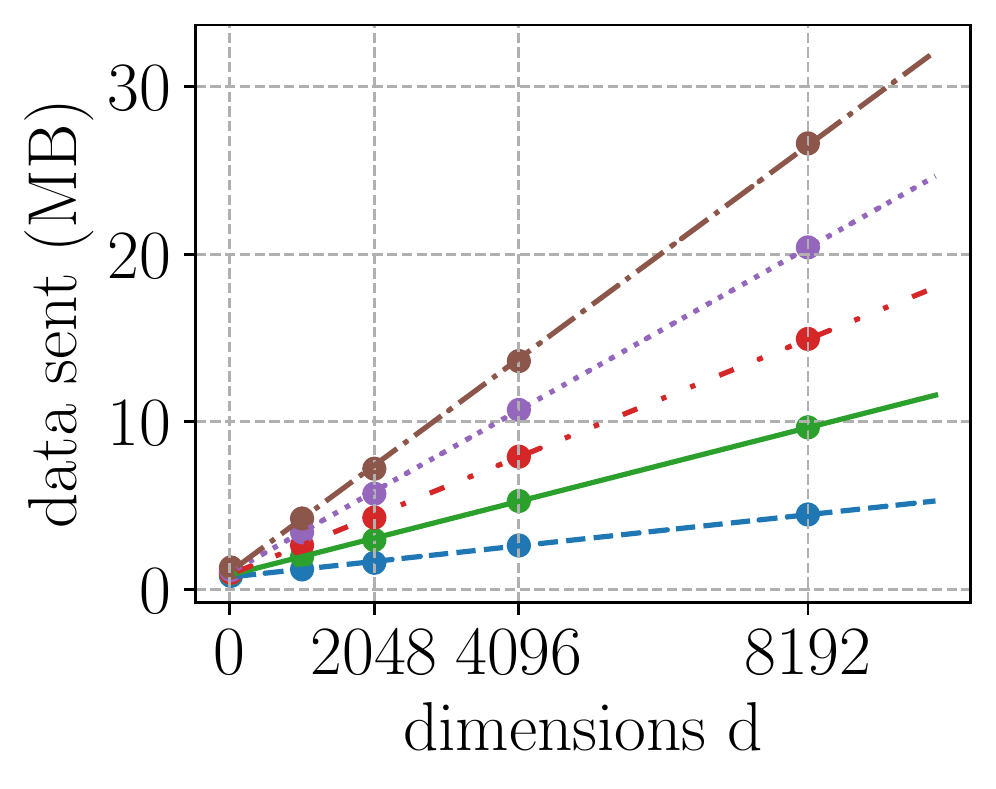}
            \includegraphics[width=0.49\linewidth]{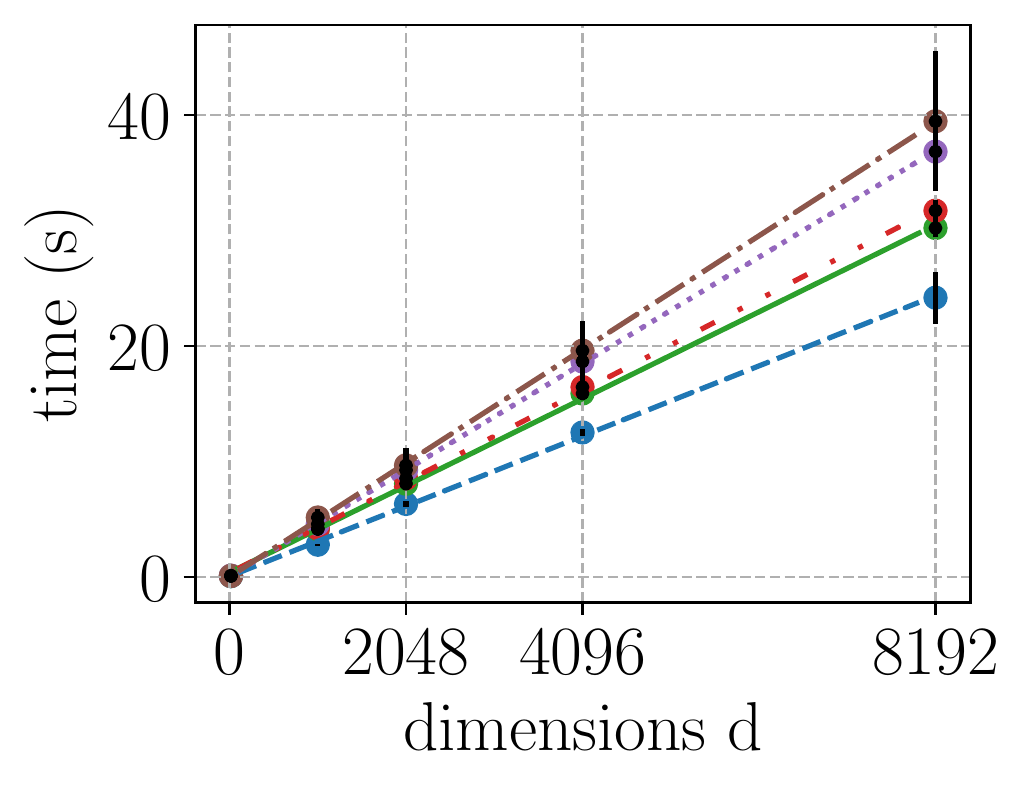}
            \\
            \includegraphics[width=0.70\linewidth]{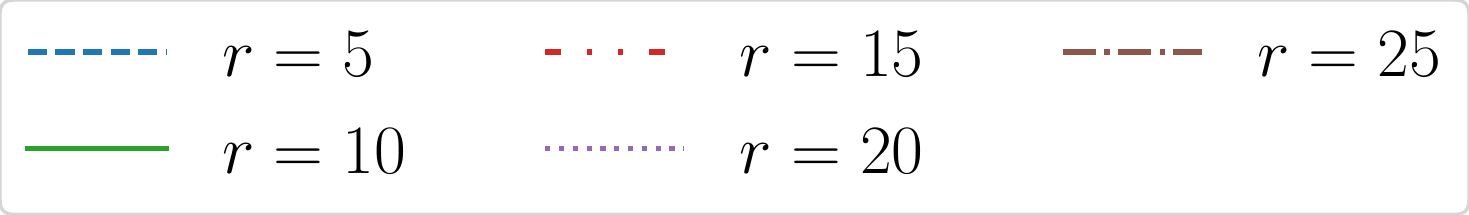}
            \subcaption{Communication and time for $d$ in range 16-8192.}
    \end{subfigure}
    \begin{subfigure}{0.49\linewidth}
        \centering

        \includegraphics[width=0.49\linewidth]{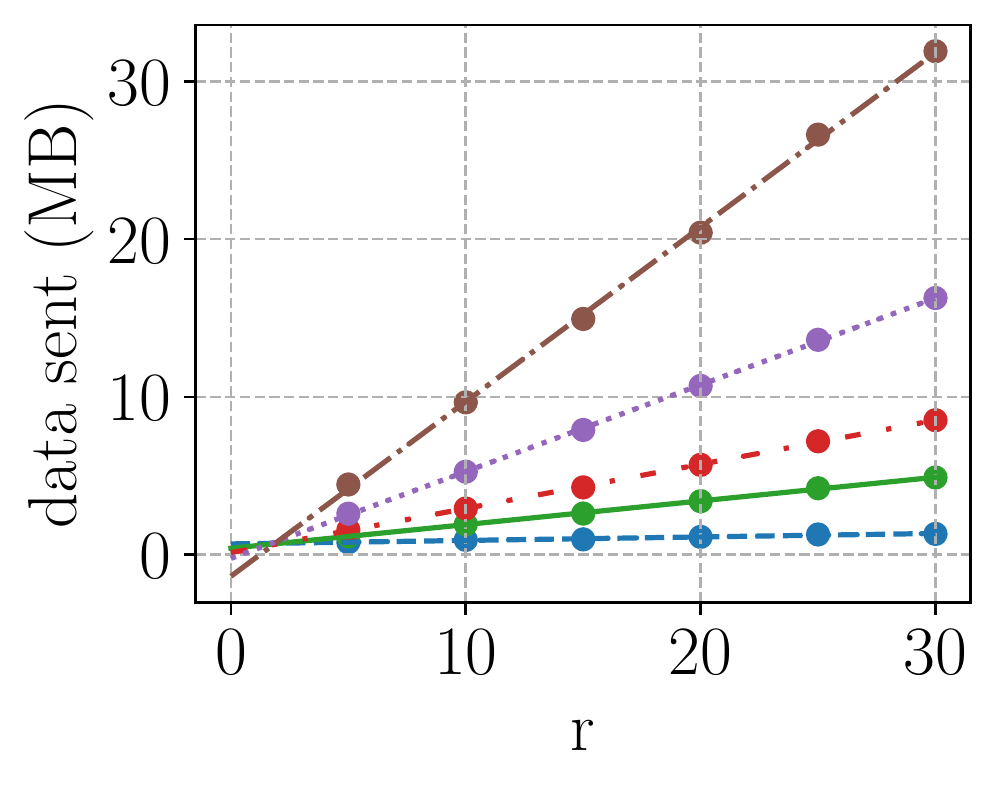}
        \includegraphics[width=0.49\linewidth]{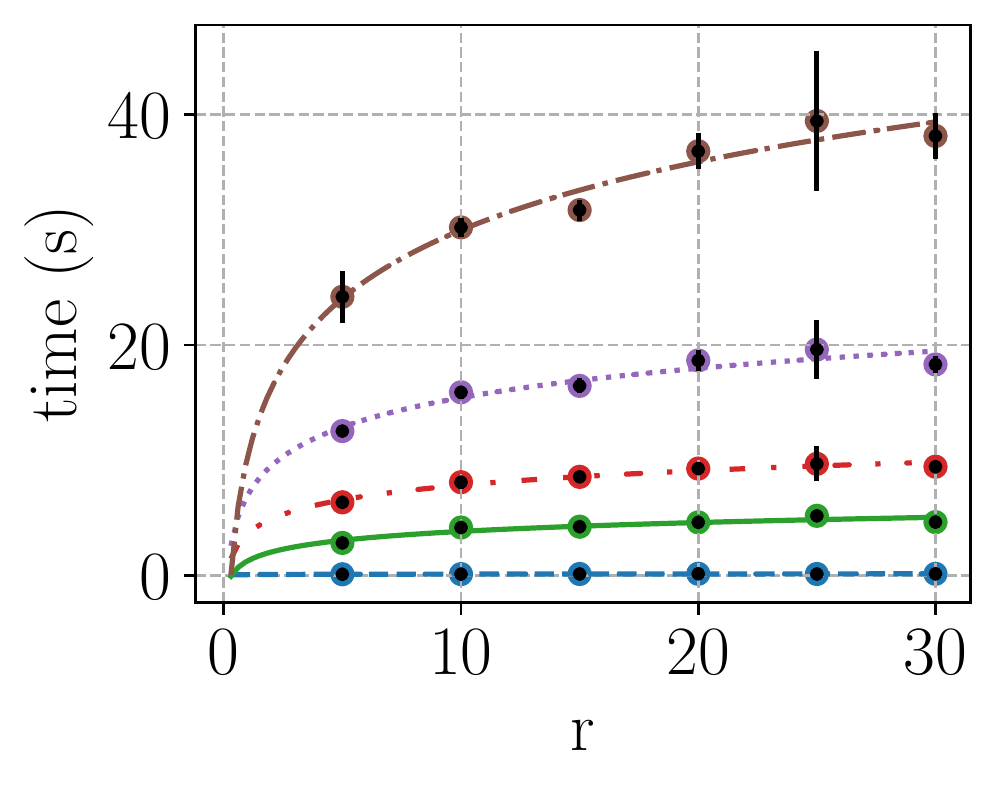}
        \\
        \includegraphics[width=0.78\linewidth]{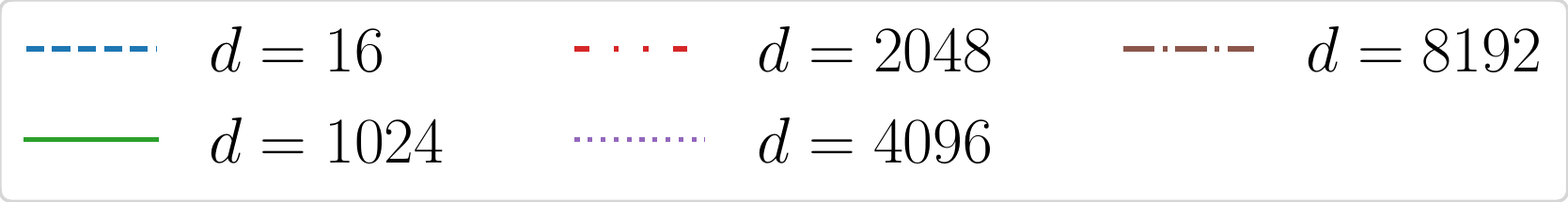}
        \subcaption{Communication and time for $r$ in range 5-30.}
    \end{subfigure}
    \caption{MPC overhead, scaling with dimensions and remaining bits; lower is better}
    \label{fig:vary_n_d}
\end{figure*}

\section{Related work}\label{sec:related-work}

The exponential mechanism, as well as ``report-noisy-max''~\cite{dwork_algorithmic_2014}, offer asymptotically optimal solutions to the selection problem in the central model.
A mechanism with better constant factors is \emph{permute-and-flip} introduced by \cite{mckenna_permute-and-flip_2020}.
We compare with their work by evaluating selection on the same benchmarking datasets and achieve comparable utility using the weaker trust assumptions of multi-central differential privacy.

The setting where data is not, and cannot, be gathered by a central entity, was a motivation for \emph{local differential privacy}~\cite{duchi_local_2013}, where a differentially private function of each participant's data is released.
One such protocol for binary data is the classical \emph{randomized response} protocol by Warner~\cite{warner_randomized_1965}.
We can apply randomized response to each bit of a binary vector (splitting the privacy budget), as seen for example in~\cite{erlingsson_rappor_2014}, which allows us to estimate the sum of vectors with an error proportional to $\sqrt{n}$, where $n$ is the number of vectors.

Recent work~\cite{bittau_prochlo_2017, erlingsson_amplification_2019, cheu_distributed_2019,steinke_multi-central_2020} has increasingly focused on models of differential privacy that lie between the central and local models. The shuffle model~\cite{bittau_prochlo_2017, cheu_distributed_2019} is built on trust assumptions that are weaker than the central model, in particular a trusted shuffler, while achieving good utility for some classes of functions. However, \cite{cheu_limits_2021} show an exponential separation between the central and (robust) shuffle models for the selection problem, motivating the need for alternative models.

Compared to the shuffle model, the \multicentralmodel distributes the computation between multiple servers, as opposed to relying on the inputs being sent using an anonymous channel (e.g., using onion routing~\cite{dingledine_tor_2004}). The first work to consider the combination of differential privacy and MPC is~\cite{EC:DKMMN06}, which focuses on distributed noise generation; however, their original work focuses on malicious adversaries, while we operate in the semi-honest security model.
Some related works focus on replacing the trusted aggregator in DP with an MPC protocol for a variety of computations, while we focus on selection.
One particularly prominent application is secure aggregation~\cite{goryczka_comprehensive_2017,CCS:BIKMMP17,mugunthan2019smpai,apple_exposure_2021}, used for example in federated learning, which lends itself to the use of MPC for differentially private computations and has been implemented in practice.
Secure aggregation reveals a noisy sum of inputs and requires larger error than our approach, which reveals only the output.

\bibliographystyle{alpha}
\bibliography{abbrev3,crypto,bibliography}


\ifappendix
\newpage
\appendix
\section{Supplementary material}

\subsection{Sensitivity of Rounding}

Our privacy analysis will need the property (\ref{eq:sensitivity}), repeated here for convenience:
\begin{equation}\label{eq:sensitivity}
	\left|\roundlargeparantheses{\frac{x+\eta}{\gamma}} - \roundlargeparantheses{\frac{\bar{x}+\eta}{\gamma}}\right| \leq 1 \enspace .
\end{equation}
This bound follows from how the rounding function is implemeted in our MPC protocol. Note in particular that in this case we are not interested in the rounding error (i.e., the difference between the rounded value and the result of our approximate rounding function) but the sensitivity of the rounding function (i.e., the difference between the result of the approximate rounding function on two neighbouring inputs, regardless of their actual accuracy).

First remember that the users secret share $x$ to the computing servers by picking $\hon-1$ uniformly random integers $x_1,\ldots,x_{\hon-1}$ from an appropriately large interval) and finally defining $x_\hon =x-\sum_{i\in[\hon-1]} x_i$ (resp.~$\bar{x}_\hon =\bar{x}-\sum_{i\in[\hon-1]} x_i$), defining sharings $\ssi{x}$ and $\ssi{\bar{x}}$.  Note that it is crucial that in this phase of the analysis we are fixing the randomness of both $\eta$ and the random shares, and we are only varying the input.
Now remember that the rounding function is being implemented by having each computing server locally rounding their value which leads to
\[
\roundlargeparantheses{\frac{{x} + {\eta}}{\gamma}}=\sum_{i\in[\hon]}\left\lfloor (x_i+ {\eta})/\gamma\right\rceil \enspace .
\]
Thus we get that
\begin{align*}
\left|\roundlargeparantheses{\frac{{x} + {\eta}}{\gamma}} - \roundlargeparantheses{\frac{\bar{{x}} + {\eta}}{\gamma}}\right| & =
\left| \sum_{i\in[\hon]}\left\lfloor (x_i+ {\eta})/\gamma\right\rceil - \sum_{i\in[\hon]}\left\lfloor (\bar{x}_i+ {\eta})/\gamma\right\rceil \right| \\ & =
\left|  \left\lfloor (x_\hon+ {\eta})/\gamma\right\rceil - \left\lfloor (\bar{x}_\hon+ {\eta})/\gamma\right\rceil  \right|
 \leq 1
\end{align*}

Where the last inequality follows noticing that $x,\bar{x}$ are at most $1$ apart.

\subsection{Privacy Analysis}\label{app:privacy-analysis}

As a warm-up we analyze an easier special case, after which we handle the general case.

\begin{lemma}\label{lemma:centralmodel-privacy}
	If $\Delta = 0$ and $\gamma = 1$, Algorithm~\ref{alg:private-arg-max} is $\varepsilon$-differentially private.
\end{lemma}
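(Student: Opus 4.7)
The plan is to use the classical ``report noisy max'' argument, tailored to the one-sided geometric noise. With $\Delta=0$ and $\gamma=1$, Algorithm~\ref{alg:private-arg-max} simply returns $\arg\max_i(x_i+\eta_i)$ where $x=\sum_{j=1}^n x^{(j)}$ and the $\eta_i$ are i.i.d.\ Geometric$(1-e^{-\varepsilon/2})$. The key arithmetic fact about this distribution is that $\Pr[\eta\geq k]=e^{-\varepsilon k/2}$ for integer $k\geq 0$ and equals $1$ for $k<0$, so $\Pr[\eta\geq k]/\Pr[\eta\geq k-2]=e^{\varepsilon}$ whenever the denominator involves a nonnegative threshold.

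First I would fix a candidate output index $i$ and condition on the noise on all other coordinates $\eta_{-i}=(\eta_j)_{j\neq i}$. Setting $c=\max_{j\neq i}(x_j+\eta_j)$ and $c'=\max_{j\neq i}(x'_j+\eta_j)$ for neighboring $x,x'$ (which differ only in one contributing vector from $\{0,1\}^d$), a coordinate-wise argument gives $|x_i-x'_i|\leq 1$ and $|c-c'|\leq 1$, so $(c-x_i)-(c'-x'_i)\in[-2,2]$. Under a fixed tie-breaking rule the event ``argmax $=i$'' conditioned on $\eta_{-i}$ is exactly $\{\eta_i\geq c-x_i\}$, so the conditional output probability under $x$ is $\Pr[\eta_i\geq c-x_i]$ and under $x'$ is $\Pr[\eta_i\geq c'-x'_i]$. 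Using the tail formula above, the ratio of these two probabilities is at most $e^\varepsilon$. Integrating over $\eta_{-i}$ and summing over $i\in Z$ for any output set $Z$ then yields $\varepsilon$-differential privacy.

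The main obstacle will be handling the boundary case when $c-x_i<0$: the tail formula no longer factors as a pure exponential, so one must check the ratio carefully. I would resolve this by observing that in this regime the numerator is $1$, while $c'-x'_i\leq c-x_i+2\leq 2$, giving a denominator of at least $e^{-\varepsilon}$ and hence a ratio still bounded by $e^\varepsilon$. A second minor subtlety is the tie-breaking in $\arg\max$ for a discrete distribution; I would sidestep it by fixing a deterministic rule (e.g., smallest index among the maximizers), and absorbing the tie event $\{\eta_i=c-x_i\}$ into the same tail bound, since $\Pr[\eta_i=k]$ and $\Pr[\eta_i\geq k]$ both shift by the same factor $e^{\varepsilon/2}$ under a unit change in $k$.
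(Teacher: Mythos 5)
Your proposal is correct and follows essentially the same route as the paper's proof: both condition on (the maximum over) the other coordinates' noisy values, observe that the resulting threshold for $\eta_i$ shifts by at most $2$ between neighboring inputs, and apply the geometric tail identity $\Pr[\eta\geq k]\leq e^{\varepsilon}\Pr[\eta\geq k+2]$, with tie-breaking handled by a fixed deterministic rule (the paper's $[i'>i]$ indicator). Your explicit treatment of the negative-threshold boundary case is a nice touch of extra care that the paper leaves implicit, but it is not a different argument.
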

\begin{proof}
	Let $\mathcal{M}(\vectorize{x})$ denote the output of Algorithm~\ref{alg:private-arg-max} on input with sum $\vectorize{x} \in {\bf Z}^d$.
	Notice that $\mathcal{M}(\vectorize{x})=i$ if and only if
	\begin{equation}\label{eq:outputcondition}
		\vectorize{x}_i + \vectorize{\eta}_i \geq  \max_{i'\ne i} (\vectorize{x}_{i'} + \vectorize{\eta}_{i'} + [i' > i]),
	\end{equation}
	where $[i' > i]$ equals 1 if the condition $i' > i$ holds and 0 otherwise.
	Consider a neighboring dataset with sum $\bar{\vectorize{x}}$.
	By definition of the neighboring relation it follows that both the left and right hand side of (\ref{eq:outputcondition}) change by at most 1 when replacing $\vectorize{x}$ with $\bar{\vectorize{x}}$.
	Using independence and the tail bound on the geometric distribution, we bound
	\begin{align*}
		\Pr[\mathcal{M}(\vectorize{x}) = i] & = \sum_{y} \Pr[\max_{i'\ne i} (\vectorize{x}_{i'} + \vectorize{\eta}_{i'} + [i' > i]) = y] \Pr[ \vectorize{x}_i + \vectorize{\eta}_i \geq y ]\\
		& \leq \sum_{y} \Pr[\max_{i'\ne i} (\vectorize{x}_{i'} + \vectorize{\eta}_{i'} + [i' > i]) = y] \Pr[ \vectorize{x}_i + \vectorize{\eta}_i \geq y + 2 ]\, e^\varepsilon \\
		& = e^\varepsilon \, \Pr[\vectorize{x}_i + \vectorize{\eta}_i \geq \max_{i'\ne i} (\vectorize{x}_{i'} + \vectorize{\eta}_{i'} + [i' > i]) + 2 ] \\
		& \leq e^\varepsilon \, \Pr[\bar{\vectorize{x}}_i + \vectorize{\eta}_i \geq \max_{i'\ne i} (\bar{\vectorize{x}}_{i'} + \vectorize{\eta}_{i'} + [i' > i]) ] \\
		& = e^\varepsilon \, \Pr[\mathcal{M}(\bar{\vectorize{x}}) = i] \enspace .
	\end{align*}
	By symmetry we also have $\Pr[\mathcal{M}(\bar{\vectorize{x}}) = i] \leq e^\varepsilon \, \Pr[\mathcal{M}(\vectorize{x}) = i]$, as desired.
\end{proof}

We are ready to prove Lemma~\ref{lemma:MPC-privacy}, which generalizes Lemma~\ref{lemma:centralmodel-privacy} to any value of the parameters:
\begin{proof}
	The key difference to the proof of Lemma~\ref{lemma:centralmodel-privacy} is that while $\vectorize{x}_i + \vectorize{\eta}_i$ and $\bar{\vectorize{x}}_i + \vectorize{\eta}_i$ differ by at most $1$, we now use (\ref{eq:sensitivity}) to bound $\Pr[\mathcal{M}(\vectorize{x}) = i]$ by
	\begin{align*}
		& \sum_{y} \Pr\left[\max_{i'\ne i} \left(\roundlargeparantheses{\frac{\vectorize{x}_i + \vectorize{\eta}_i}{\gamma}} + [i' > i] \right)  = y\right] \Pr\left[ \roundlargeparantheses{\frac{\vectorize{x}_i + \vectorize{\eta}_i}{\gamma}} \geq y \right]\\
		& \leq \sum_{y} \Pr\left[\max_{i'\ne i} \left(\roundlargeparantheses{\frac{\vectorize{x}_i + \vectorize{\eta}_i}{\gamma}} + [i' > i] \right)  = y\right] \Pr\left[ \roundlargeparantheses{\frac{\vectorize{x}_i + \vectorize{\eta}_i}{\gamma}} \geq y + 2 \right] e^\varepsilon \\
		& = e^\varepsilon \, \Pr\left[  \roundlargeparantheses{\frac{\vectorize{x}_i + \vectorize{\eta}_i}{\gamma}} \geq \max_{i'\ne i} \left(\roundlargeparantheses{\frac{\vectorize{x}_i + \vectorize{\eta}_i}{\gamma}} + [i' > i] \right) + 2 \right] \\
		& \leq e^\varepsilon \, \Pr\left[  \roundlargeparantheses{\frac{\bar{\vectorize{x}}_i + \vectorize{\eta}_i}{\gamma}} \geq \max_{i'\ne i} \left(\roundlargeparantheses{\frac{\bar{\vectorize{x}}_i + \vectorize{\eta}_i}{\gamma}} + [i' > i] \right) \right] \\
		& = e^\varepsilon \, \Pr[\mathcal{M}(\bar{x}) = i] \enspace .
	\end{align*}
	By symmetry we also have $\Pr[\mathcal{M}(\bar{\vectorize{x}}) = i] \leq e^\varepsilon \, \Pr[\mathcal{M}(\vectorize{x}) = i]$, completing the proof.
\end{proof}

\subsection{MPC Protocol Analysis}\label{app:mpc-analysis}

\begin{table} [tb]
	\caption{ Complexity Analysis, $k>3$ }\label{tab:complexity}%
	 \centering
		\begin{tabularx}{0.45\textwidth}{lcc}
			\toprule
			& Bits sent & Rounds \\
			\midrule
			Offline & $\mathcal{O}(da^2k\log k)$ & $\mathcal{O}(1)$ \\
			Online & $\mathcal{O}(akd)$ & $\mathcal{O}(\log d \log a)$  \\
		\bottomrule
		\end{tabularx}
	\end{table}
	\begin{table} [tb]
		\caption{ Complexity Analysis, $k=3$}\label{tab:complexity2}%
		 \centering
	   \begin{tabularx}{0.45\textwidth}{lcc}
		   \toprule
		    & Bits sent & Rounds \\
		   \midrule
		   Offline & $\mathcal{O}(da^2)$ & $\mathcal{O}(1)$ \\
		   Online & $\mathcal{O}(ad)$ & $\mathcal{O}(\log d \log a)$  \\
		  \bottomrule
	   \end{tabularx}
\end{table}

We offer an analysis of the complexity associated with the operations performed by the servers in Algorithm~\ref{alg:our-mpc-protocol}, in terms of the number of necessary communication rounds and the number of bits communicated during the protocol. The local generation of noise by each server and the generation of shares of this noise by supporting servers incur no communication. However, one round of communication is necessary in order for all supporting servers to distribute their noise shares to the computing servers. Adding the shared values and the noise vectors, as well as locally truncating the resulting shares and converting them to shares over a ring, require no communication. Since the argmax is clearly the bottleneck, we will analyze that.

In order to run the \Argmax protocol, the preprocessing step involves generating additive shares modulo $2^a$ for $a(d-1)$ random bits, because each of $d-1$ comparisons requires shares of $a$ bits. Specifically in the case of 3 servers, the dealer can generate these shares locally, so only one round of communication to distribute the shares is necessary, and the number of bits to communicate will be
$\mathcal{O}(da^2)$.

Preprocessing of each secret shared bit with more than 3 servers is done using the techniques from  \cite{TCC:ACDEY19}.
This  involves generating a random shared value and a constant number of multiplications.
This can be done while communicating $\mathcal{O}(k)$ elements of the ring over which the preprocessing is done.
Due to the fact that we need ``Shamir-style'' secret sharing for the multiplications, we need to use a ring extension of $\Z_{2^a}$, where elements have size $a \log(k)$ bits,
so we get communication of $\mathcal{O}(a k \log(k))$ bits per shared random bit and so a total of $\mathcal{O}(a^2 k \log(k))$ because we need  $a$ random shared bits.
Since all these bits can be created in parallel, we can do them all in a constant number of rounds. We also need $\mathcal{O}(a)$ multiplication triples for multiplying bits,
these can be done in the same complexity using the same techniques.

After precomputation is complete, running the \Argmax protocol requires $\mathcal{O}(d)$ comparisons in a circuit structure with depth $\mathcal{O}(\log d)$. Each comparison requires opening two secret shared values and executing two binary LT circuits. The LT circuit consists of $2a-2$ multiplications, including two share openings each, and can be done using a circuit of depth $\log a$, where the depth indicates the number of necessary rounds. Therefore, this step incurs $\mathcal{O}(ad)$ share openings and multiplications, and $\mathcal{O}(\log a\log d)$ rounds of communication. Since $k$ servers are involved, these share openings and multiplications require communication $\mathcal{O}(akd)$, which is $\mathcal{O}(ad)$ if $k=3$.

In total, the total communication and number of rounds when $k>3$ is summarized in Table~\ref*{tab:complexity} and when $k=3$ is summarized in Table~\ref*{tab:complexity2}.

\subsection{Utility evaluation}\label{app:utility-evaluation}

Here we present an individual plot for running the algorithms on each of the five datasets from DPBench, as well as the mean error and standard deviation  in tabular format.
We pick the values of $\varepsilon$ to be as small as possible to capture when the most of the algorithms converge to an error of 0.
Note that such small values for $\varepsilon$ creates a high standard deviation for some datasets, e.g., for ADULTFRANK (see \Cref{tab:adultfrank-utility}).
The full outputs from the experiments are documented in the accompanying csv files in the \texttt{code/results} folder.

\begin{figure}[htb]
	\centering
	\includegraphics[width=\textwidth]{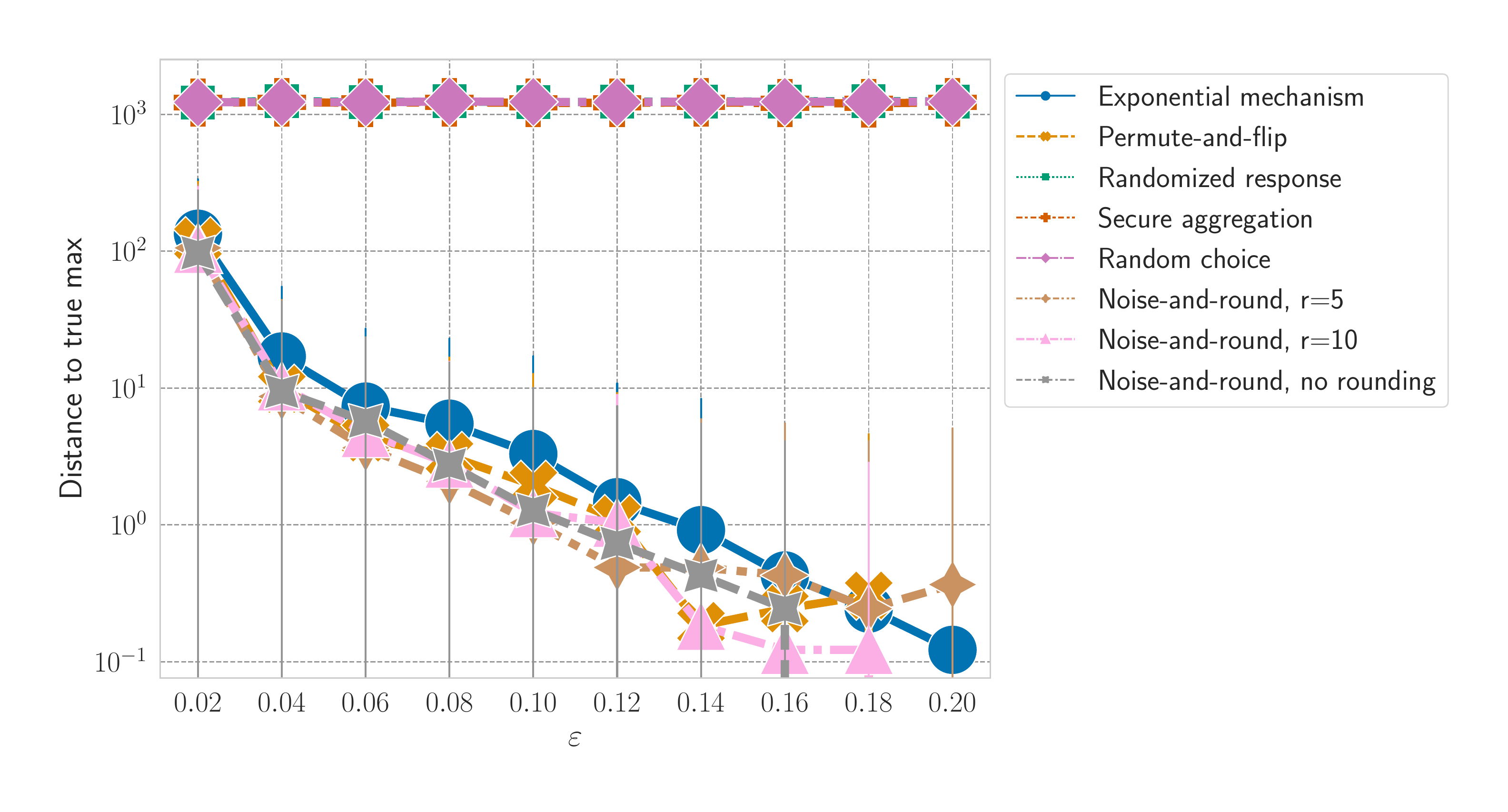}
	\caption{HEPTH dataset. Absolute difference between the real chosen argmax on a log-log scale. Lower is better.}
	\label{fig:hepth-utility}
\end{figure}

\begin{table}[htb]
	\caption{HEPTH summary, \ouralgorithm without rounding}\label{tab:heapth-utility}
	\begin{tabularx}{\textwidth}{llll}
		\toprule
		\multicolumn{2}{r}{\textbf{Dataset}} & \textbf{Max value} & \textbf{Bits} \\ \midrule
		\multicolumn{2}{r}{HEPTH} & 1571 & 11 \\
		\toprule
		\textbf{Algorithm} & $\varepsilon$ & \textbf{Mean error}  & \textbf{Standard deviation} \\ \midrule
		\permuteandflip & 0.02 & 117.758 & 204.729 \\
		\ouralgorithm & 0.02 &  97.247 & 181.122 \\ \midrule
		\permuteandflip & 0.04 & 9.829 & 31.0666 \\
		\ouralgorithm & 0.04 & 9.444 &
		33.634 \\ \midrule
		\permuteandflip & 0.06 & 4.331 & 15.674 \\
		\ouralgorithm & 0.06 & 5.734 &
		17.810 \\ \midrule
		\permuteandflip & 0.08 & 3.172 & 13.550 \\
		\ouralgorithm & 0.08 &  2.745 &
		12.652 \\ \midrule
		\permuteandflip & 0.10 & 1.952 & 10.7414 \\
		\ouralgorithm & 0.10 &  1.281 &
		8.751 \\ \midrule
		\permuteandflip & 0.12 & 1.098 & 8.114 \\
		\ouralgorithm & 0.12 &  0.732 &
		6.645 \\ \midrule
		\permuteandflip & 0.14 & 0.183 & 3.338 \\
		\ouralgorithm & 0.14 & 0.427 &
		5.088 \\ \midrule
		\permuteandflip & 0.16 & 0.244 & 3.852 \\
		\ouralgorithm & 0.16 & 0.244 &
		3.852  \\ \midrule
		\permuteandflip & 0.18 & 0.305 &
		4.305 \\
		\ouralgorithm & 0.18 & 0.0 & 0.0\\ \midrule
		\permuteandflip & 0.20 & 0.0 & 0.0 \\
		\ouralgorithm & 0.20 & 0.0&0.0 \\
		\bottomrule
	\end{tabularx}
\end{table}

\begin{figure}[htb]
	\centering
	\includegraphics[width=0.7\textwidth]{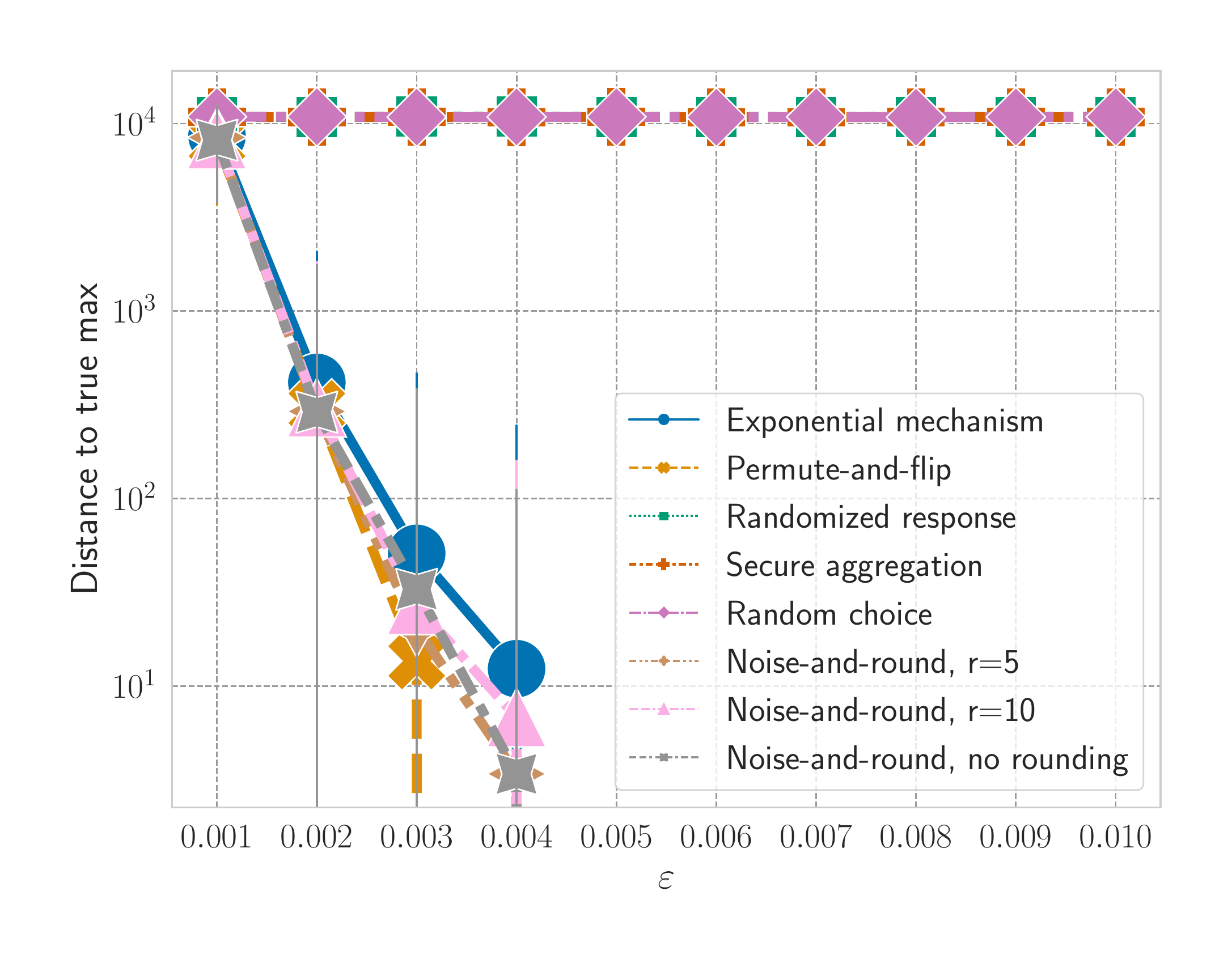}
	\caption{SEARCHLOGS dataset. Absolute difference between the real chosen argmax on a log-log scale. Lower is better.}
	\label{fig:searchlogs-utility}
\end{figure}

\begin{table}[htb]
	\caption{SEARCHLOGS summary, \ouralgorithm without rounding}\label{tab:searchlogs-utility}
	\begin{tabularx}{\textwidth}{llll}
		\toprule
				\multicolumn{2}{r}{\textbf{Dataset}} & \textbf{Max value} & \textbf{Bits} \\ \midrule
		\multicolumn{2}{r}{SEARCHLOGS} & 11160& 14 \\
		\toprule
		\textbf{Algorithm} & $\varepsilon$ & \textbf{Mean error}  & \textbf{Standard deviation} \\ \midrule
		\permuteandflip & 0.001 &  8039.062 &
		4354.840 \\
		\ouralgorithm & 0.001 & 8145.286 &
		4347.058 \\ \midrule
		\permuteandflip & 0.002 & 303.066 &
		1429.518 \\
		\ouralgorithm & 0.002 & 288.855 &
		1472.448 \\ \midrule
		\permuteandflip & 0.003 & 13.616 &
		214.964 \\
		\ouralgorithm & 0.003 & 32.812 &
		350.543 \\ \midrule
		\permuteandflip & 0.004 & 0.0 & 0.0 \\
		\ouralgorithm & 0.004 &  3.404 &
		107.644 \\ \midrule
		\permuteandflip & 0.005 & 0.0& 0.0\\
		\ouralgorithm & 0.005 & 0.0& 0.0\\ \midrule
		\permuteandflip & 0.006 & 0.0& 0.0\\
		\ouralgorithm & 0.006 & 0.0& 0.0\\ \midrule
		\permuteandflip & 0.007 & 0.0&0.0 \\
		\ouralgorithm & 0.007 & 0.0 & 0.0\\ \midrule
		\permuteandflip & 0.008 & 0.0& 0.0\\
		\ouralgorithm & 0.008 & 0.0 & 0.0\\ \midrule
		\permuteandflip & 0.009 & 0.0& 0.0\\
		\ouralgorithm & 0.009 & 0.0 & 0.0\\ \midrule
		\permuteandflip & 0.010 & 0.0& 0.0\\
		\ouralgorithm & 0.010 & 0.0& 0.0\\
		\bottomrule
	\end{tabularx}
\end{table}

\begin{figure}[htb]
	\centering
	\includegraphics[width=\textwidth]{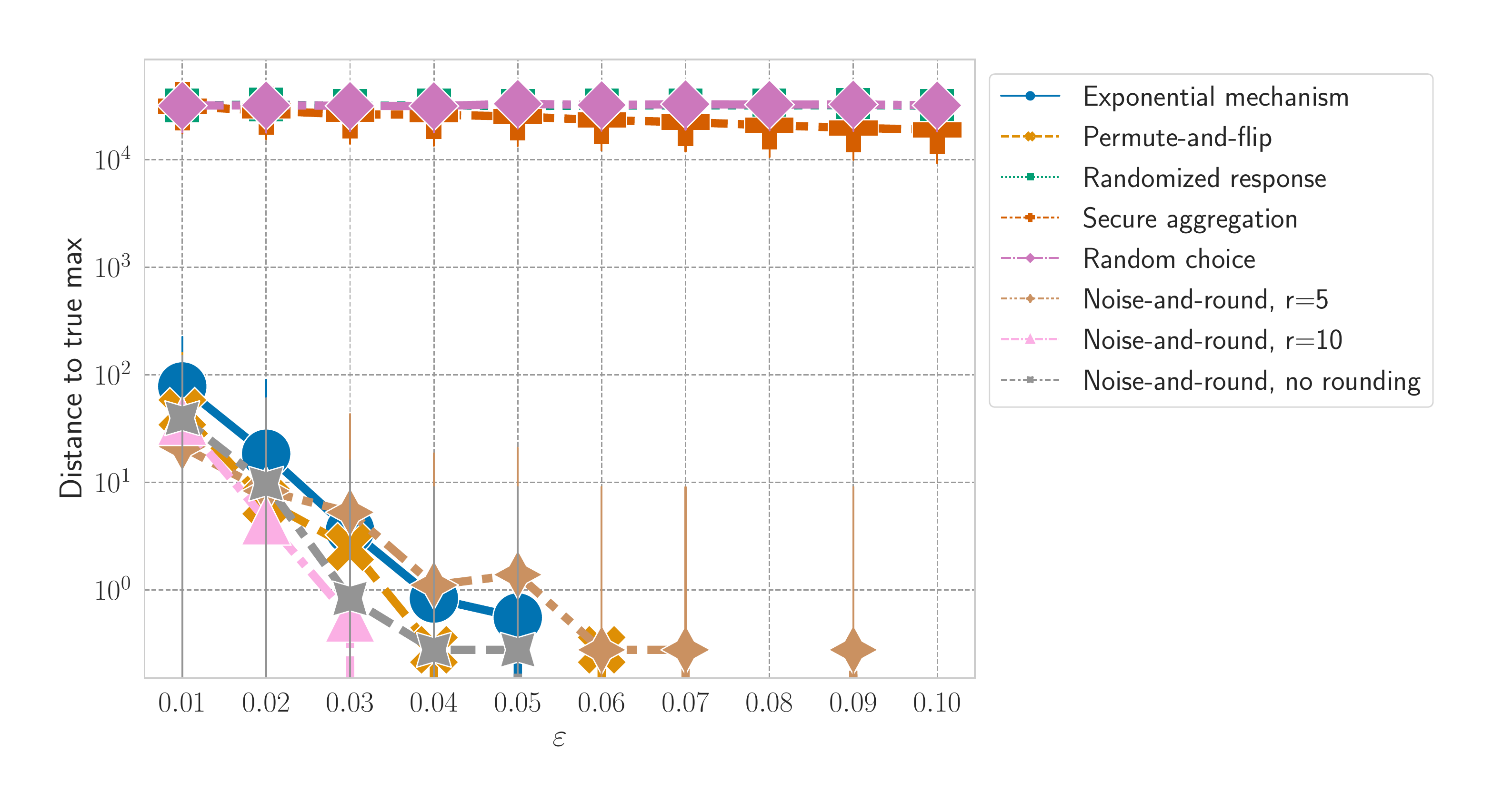}
	\caption{PATENT dataset. Absolute difference between the real chosen argmax on a log-log scale. Lower is better.}
	\label{fig:patent-utility}
\end{figure}

\begin{table}[htb]
	\caption{PATENT summary, \ouralgorithm without rounding}\label{tab:patent-utility}
	\begin{tabularx}{\textwidth}{llll}
		\toprule
		\multicolumn{2}{r}{\textbf{Dataset}} & \textbf{Max value} & \textbf{Bits} \\ \midrule
		\multicolumn{2}{r}{PATENT} & 59602& 16\\
		\toprule
		\textbf{Algorithm} & $\varepsilon$ & \textbf{Mean error}  & \textbf{Standard deviation} \\ \midrule
	\permuteandflip & 0.01 &  44.862 &
	115.574 \\
	\ouralgorithm & 0.01 & 39.58 &
	111.060 \\ \midrule
	\permuteandflip & 0.02 & 6.672 &
	42.569 \\
	\ouralgorithm & 0.02 & 9.73 &
	51.116 \\ \midrule
	\permuteandflip & 0.03 & 2.502 &
	26.268 \\
	\ouralgorithm & 0.03 & 0.834 &
	15.211 \\ \midrule
	\permuteandflip & 0.04 & 0.278 &
	8.791 \\
	\ouralgorithm & 0.04 & 0.278 &
	8.791 \\ \midrule
	\permuteandflip & 0.05 & 0.0 & 0.0 \\
	\ouralgorithm & 0.05 & 0.278 &
	8.791 \\ \midrule
	\permuteandflip & 0.06 & 0.278 &
	8.791 \\
	\ouralgorithm & 0.06 & 0.0 & 0.0 \\ \midrule
	\permuteandflip & 0.07 & 0.0 & 0.0 \\
	\ouralgorithm & 0.07 & 0.0 & 0.0 \\ \midrule
	\permuteandflip & 0.08 & 0.0 & 0.0 \\
	\ouralgorithm & 0.08 & 0.0 & 0.0 \\ \midrule
	\permuteandflip & 0.09 & 0.0 & 0.0 \\
	\ouralgorithm & 0.09 & 0.0 & 0.0 \\ \midrule
	\permuteandflip & 0.10 & 0.0 & 0.0 \\
	\ouralgorithm & 0.10 & 0.0 & 0.0 \\
		\bottomrule
	\end{tabularx}
\end{table}

\begin{figure}[htb]
	\centering
	\includegraphics[width=0.7\textwidth]{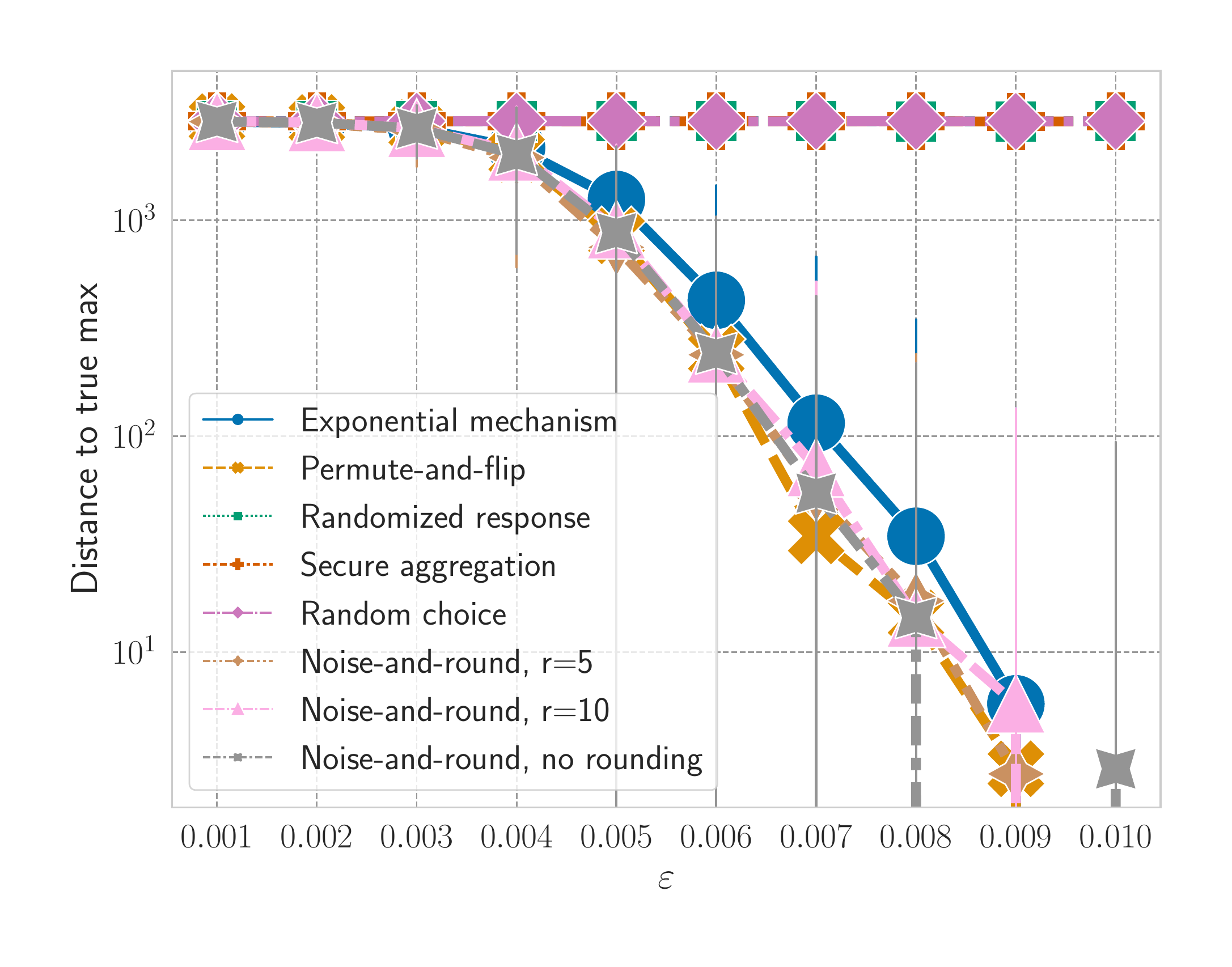}
	\caption{MEDCOST dataset. Absolute difference between the real chosen argmax on a log-log scale. Lower is better.}
	\label{fig:medcost-utility}
\end{figure}

\begin{table}[htb]
	\caption{MEDCOST summary, \ouralgorithm without rounding}\label{tab:moedcost-utility}
	\begin{tabularx}{\textwidth}{llll}
		\toprule
		\multicolumn{2}{r}{\textbf{Dataset}} & \textbf{Max value} & \textbf{Bits} \\ \midrule
		\multicolumn{2}{r}{MEDCOST} & 2885 &  12\\
		\toprule
		\textbf{Algorithm} & $\varepsilon$ & \textbf{Mean error}  & \textbf{Standard deviation} \\ \midrule
		\permuteandflip & 0.001 & 2866.961 &
		184.046 \\
		\ouralgorithm & 0.001 & 2865.654 &
		184.471 \\ \midrule
		\permuteandflip & 0.002 & 2839.364 &
		339.219 \\
		\ouralgorithm & 0.002 &  2831.078 &
		362.647 \\ \midrule
		\permuteandflip & 0.003 & 2690.079 &
		710.319 \\
		\ouralgorithm & 0.003 & 2671.08 &
		739.789 \\ \midrule
		\permuteandflip & 0.004 & 2020.671 &
		1317.405 \\
		\ouralgorithm & 0.004 & 2016.477 &
		1317.891\\ \midrule
		\permuteandflip & 0.005 & 847.703 &
		1311.318 \\
		\ouralgorithm & 0.005 & 874.724 &
		1324.313 \\ \midrule
		\permuteandflip & 0.006 & 240.803 &
		795.751 \\
		\ouralgorithm & 0.006 & 241.873 &
		799.149 \\ \midrule
		\permuteandflip & 0.007 & 34.55 &
		313.660 \\
		\ouralgorithm & 0.007 & 54.5 &
		391.862 \\ \midrule
		\permuteandflip & 0.008 & 14.423 &
		203.563 \\
		\ouralgorithm & 0.008 & 14.422 &
		203.549 \\ \midrule
		\permuteandflip & 0.009 & 2.885 &
		91.232 \\
		\ouralgorithm & 0.009 & 0.0 & 0.0 \\ \midrule
		\permuteandflip & 0.010 & 0.0 & 0.0 \\
		\ouralgorithm & 0.010 & 2.884 &
		91.200 \\
		\bottomrule
	\end{tabularx}
\end{table}

\begin{figure}[htb]
	\centering
	\includegraphics[width=0.7\textwidth]{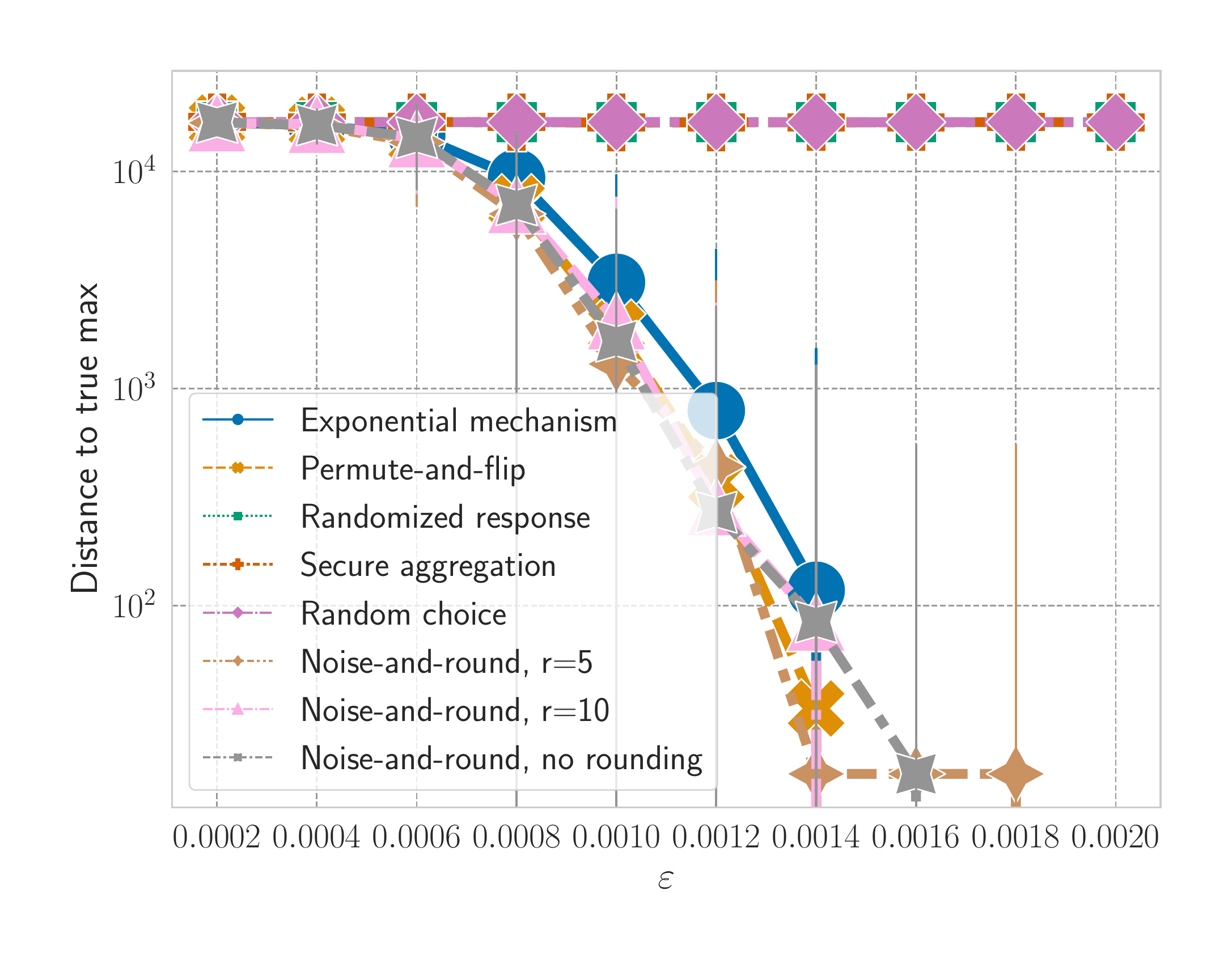}
	\caption{ADULTFRANK dataset. Absolute difference between the real chosen argmax on a log-log scale. Lower is better.}
	\label{fig:adultfrank-utility}
\end{figure}

\begin{table}[htb]
	\caption{ADULTFRANK summary, \ouralgorithm without rounding}\label{tab:adultfrank-utility}
	\begin{tabularx}{\textwidth}{llll}
		\toprule
		\multicolumn{2}{r}{\textbf{Dataset}} & \textbf{Max value} & \textbf{Bits} \\ \midrule
		\multicolumn{2}{r}{ADULTFRANK} & 16836&15 \\
		\toprule
		\textbf{Algorithm} & $\varepsilon$ & \textbf{Mean error}  & \textbf{Standard deviation} \\ \midrule
		\permuteandflip & 0.0002 & 16734.164 &
		1300.793 \\
		\ouralgorithm & 0.0002 & 16801.488 &
		752.541 \\ \midrule
		\permuteandflip & 0.0004 & 16380.53 &
		2730.059 \\
		\ouralgorithm & 0.0004 & 16313.562 &
		2919.344 \\ \midrule
		\permuteandflip & 0.0006 & 14007.064 &
		6297.344 \\
		\ouralgorithm & 0.0006 & 14293.236 &
		6030.906 \\ \midrule
		\permuteandflip & 0.0008 & 7121.219 &
		8321.266 \\
		\ouralgorithm & 0.0008 & 7003.551 &
		8302.241 \\ \midrule
		\permuteandflip & 0.0010 & 1885.368 &
		5311.424 \\
		\ouralgorithm & 0.0010 & 1649.898 &
		5008.003 \\ \midrule
		\permuteandflip & 0.0012 & 370.387 &
		2470.763 \\
		\ouralgorithm & 0.0012 & 269.376 &
		2113.556 \\ \midrule
		\permuteandflip & 0.0014 & 33.672 &
		752.552 \\
		\ouralgorithm & 0.0014 & 84.18 &
		1188.099 \\ \midrule
		\permuteandflip & 0.0016 & 0.0 & 0.0 \\
		\ouralgorithm & 0.0016 & 16.836 &
		532.401 \\ \midrule
		\permuteandflip & 0.0018 & 0.0 & 0.0 \\
		\ouralgorithm & 0.0018 & 0.0 & 0.0 \\ \midrule
		\permuteandflip & 0.0020 & 0.0 & 0.0 \\
		\ouralgorithm & 0.0020 & 0.0 & 0.0 \\
		\bottomrule
	\end{tabularx}
\end{table}

\FloatBarrier

\subsection{Efficiency evaluation}\label{app:efficiency-evaluation}

All efficiency results for the five chosen datasets from DPBench are reported in Table \ref{tab:time-communication-full-table}, including the maximum value in each dataset, the number of bits necessary to represent integers in this range, as well as the runtimes and data sent. The five datasets are the same datasets chosen for evaluation by~\cite{mckenna_permute-and-flip_2020} in the \permuteandflip mechanism: PATENT, ADULTFRANK, SEARCHLOGS, MEDCOST, and HEPTH.

\begin{table} [htb]
	\caption{Benchmark results for given input}\label{tab:time-communication-full-table}%
		\centering
			\begin{tabularx}{0.9\textwidth}{lrrrr}
				\toprule
				&  &  & \multicolumn{2}{c}{Results}   \\\cmidrule{4-5}
				Dataset & Max value ($n$) &  \# Bits & Time (s) & Data sent (MB) \\
				\midrule
				PATENT & 59602 & 16  &3.83, std=0.18  & 2.75 \\
				ADULTFRANK & 16836 & 15  & 4.21, std=0.14  & 2.61 \\
				SEARCHLOGS & 11160 & 14  &3.79, std=0.14  & 2.48  \\
				MEDCOST & 2885 & 12  & 4.15, std=0.33 & 2.20  \\
				HEPTH &  1571 & 11 & 3.80, std=0.11 & 2.07  \\
				\midrule
				Truncated, $\alpha = 0.125$ & 31 & 5  & 2.80, std=0.10  & 1.78  \\
				\bottomrule
			\end{tabularx}
\end{table}

\fi

\end{document}